\author[1]{Frank Kammer}
\author[1]{Andrej Sajenko}
\keywords{
	space efficiency,
	succinct c-ary memory,
	dynamic graph representation
}
\definecolor {infocolor} {rgb} {0.6,0.6,0.6}
\theoremstyle{fact}
\newtheorem{fact}{Fact}
\long\def\full#1{%
%
 { {#1}}
}
\long\def\conf#1{%
%
}
\newcommand{\Nat}{I\!\!N}
\definecolor{darkgreen}{rgb}{0.0,0.5,0.0}
\definecolor{darkblue}{rgb}{0.0,0.0,0.7}
\definecolor{lightblue}{rgb}{0.2,0.2,1.0}
\definecolor{notgreenish}{rgb}{0.7,0.5,0.0}
\definecolor{greenish}{rgb}{0.2,0.5,0.2}
\title{Extra Space during Initialization of Succinct Data
Structures and Dynamical Initializable Arrays}
\titlerunning{Extra Space during Initialization of Succinct DS and 
Dynamical Initializable Arrays} 
\affil[1]{THM, University of Applied Sciences Mittelhessen, 
Germany\\
  \texttt{\{frank.kammer,andrej.sajenko\}@mni.thm.de}}
\authorrunning{F.\ Kammer and A.\ Sajenko} 
\subjclass{F.2.2 Nonnumerical Algorithms and Problems}
\begin{document}

\maketitle{}%

  \begin{abstract}%
Many succinct data structures on the word RAM require precomputed tables to
start operating. 
Usually, the tables can be constructed in sublinear time.
In this time, most
of a data structure is not initialized, i.e., there is plenty of
unused space allocated for the data structure.
We present a general framework 
 to store 
temporarily extra buffers between the real data 
so that the data can be processed immediately, stored first in the buffers, 
and then moved into the real data structure after finishing the tables.
As an application, we apply our framework to 
Dodis, P{\v a}tra{\c s}cu, and
               Thorup's
data structure (STOC 2010) that emulates $c$-ary memory and to 
Farzan and Munro's 
succinct encoding of arbitrary graphs (TCS 2013).
We also use our framework to 
present an
in-place dynamical initializable array.
\end{abstract}

\section{Introduction}
Small mobile devices, embedded systems, and big data 
draw the attention to space- and 
time-efficient 
algorithms,
e.g., for sorting \cite{Bea91, 
PagR98}, geometry \cite{AsaBBKMRS14, BarKLSS13,ElmK16} 
or graph algorithms
\full{\cite{AsaIKKOOSTU14,ChaMRS17, DatKM16, ElmHK15, HagK16, HagKL18,KamKL16,KamS18a}.}%
\conf{\cite{AsaIKKOOSTU14,DatKM16, ElmHK15, HagKL18}.} 
Moreover, there has been also an increased
interest in
succinct (encoding) data
structures~\cite{DodPT10,FarM13,HagK16,MunRRR12, MakN08, Pat08, Pag01}.

On a word RAM, succinct 
data structures 
often 
require precomputed tables to
start operating, e.g., 
Dodis, P{\v a}tra{\c s}cu, and
               Thorup's
data structure~\cite{DodPT10}. It 
 emulates
$c$-ary memory, for an arbitrary $c\ge 2$,
on standard binary memory almost without
losing space. Before we can store any information in the data structure,
suitable lookup tables have to be computed. Hagerup and
Kammer~\cite[Theorem 6.5]{HagK16} showed
a 
solution that
allows us to store the incoming information in an extra buffer and
 read and write values immediately.
However, their solution scrambles the data so that extra mapping tables
have to be built and used forever to access the data, even after the lookup tables are built.
Moreover, Farzan and Munro~\cite{FarM13} described a succinct
encoding of arbitrary graphs. To store a dense graph, an adjacency matrix
is stored in a compact form by decomposing the matrix in tiny submatrices.
Each submatrix is represented by an index and the mapping is done via a lookup table. 
Assume that the tables 
are not ready, but the information of the graph
already arrives in a stream. One then can use an extra buffer and
store the small matrices
in a non-compact way and operate on these matrices until the tables are
ready. At the end, the data can be moved from the buffers to the real data
structure.

Usually, the tables can be constructed in sublinear time.
In this time, most
of the data structure is not initialized, leaving plenty of
allocated space unused.
We present a general framework to store a
temporarily extra buffer between the real data during a short startup time
and the buffer has the standard access operations of an array.
In particular, 
if the buffer is not needed any more and thus is removed,
our implementation ''leaves'' the data structure as 
if the extra buffer has never existed.

As another application of our framework, we show that 
in-place initializable arrays can be made dynamic and with every increase
of the array size, the new space is always 
initialized.
Our model of computation 
is the word RAM model with a word length $w$ 
that allows us to access all input words in $O(1)$ 
time. Thus, to operate on a dynamic initializable array of maximum size 
$n_{\mathrm{max}}$ we require that the word size $w = \Omega( \log n_{\mathrm{max}})$.

To obtain a dynamic initializable array, 
we additionally assume that we can expand and shrink an already allocated memory. 
The memory can come from an operating system or can be user controlled,
i.e., the user can shrink an initializable array temporarily, use the space for  
some other purpose, and can increase it
again if it is needed later.

\subsection{Previous Array Implementation}
The folklore algorithm~\cite{Ben86,Meh84}, which uses 
two arrays with pointers pointing to each other and one array storing
the data, uses $O(nw)$ bits of extra space.
Navarro~\cite{Nav14} showed an implementation that requires $n + o(n)$ bits
of extra space. Recently, Hagerup and Kammer \cite{HagK17} 
showed that $\lceil(n(t/(2 
w))^t)\rceil$ extra bits for an arbitrary constant $t$ suffice 
if one wants to support access to the array in $O(t)$ time. In particular,
by choosing $t=O(\log n)$, the extra space is only one bit.
Very recently, Katoh and Goto~\cite{KatG17} showed that 
also constant access time is possible with 
one extra bit. All these array implementations are designed 
for static array 
sizes.

\subsection{Our Contributions}
First, we present our framework to 
extend a data structure by 
an extra buffer.
%
We then apply the framework to two known algorithms~\cite{DodPT10,FarM13}.
Finally, we 
make Katoh and Goto's in-place initializable arrays dynamic.
For this purpose, we identify problems that happen 
by a simple increase of the memory used for the array and 
stepwise improve the 
implementation to make the array 
dynamic. 
Our best solution supports 
operations \textsc{read}, \textsc{write}, and \textsc{increase} in 
constant time and an operation \textsc{shrink} in amortized constant time.

Why do we support the shrink operation only in amortized time?  
The rest of this paragraph is not a precise proof, 
but gives some intuition what the problems are 
if the goal is to support 
constant non-amortized time for initialization, writing and shrinking.
To support constant-time initialization of an array
we must have knowledge of the regions 
that are completely written by the user and these regions must store the
information which words are written. 
If we now allow arbitrary shrink operations, then we do not have the space 
to keep all information. Instead, we must clean the information, but keep
this information of the written words that still belong to the dynamic array.
If we want to run the shrink operation in non-amortized constant time, the
information must be sorted in such a way that the cleaning can be done by
simply cutting away a last part of the information. 
This means the writing operation must take care that the information is stored in some sense
``sorted''. Since
we can not sort a stream of elements 
in $O(1)$ time per element in general, 
it seems plausible that 
the writing operation takes $\omega(1)$ time.

The remainder of this paper is structured as follows.
We begin with summarizing the important parts of Katoh and Goto's algorithm 
in Section~\ref{sec:in-place-array}.
In Section~\ref{sec:extra-datastructures} we present 
our framework
to implement the extra buffer 
stored 
inside the unused space of a data structure 
without using extra space.
In Section~\ref{sec:extraSpdynArr}, we extend our framework to 
dynamic arrays.
As an application, 
we show in
Section~\ref{sec:unintended-chains} how to increase the array size 
in constant 
time.
%
%
In Sections~\ref{sec:complex-solution} and~\ref{sec:general-case}, we present the final implementation 
that also allows us to decrease the array size.
\conf{Some proofs in Section~\ref{sec:dyn-arrays} are omitted, but can be found in the appendix and
in a full version of the paper~\cite{KamS18b}.}

\section{In-Place Initializable Array}\label{sec:in-place-array}
Katoh and Goto~\cite{KatG17} introduced 
the data structure below 
and gave an implementation that uses $nw + 1$ bits and supports all operations 
in constant time. 

\begin{definition}\label{def:init-array}
	An initializable array $\mathcal D$ is a data structure that 
        stores $n \in \Nat$ elements of the universe $Z \subseteq \Nat$
	and provides the following operations: 
	
	\begin{itemize}
		\item \textsc{read}($i$) ($i \in \Nat$): Returns the $i$-th element of $\mathcal D$.
		\item \textsc{write}($i, x$) ($i \in \Nat, x \in Z$): Sets the $i$-th element of $\mathcal D$ to 
		$x$.
		\item \textsc{init}($x$) ($x \in Z$): Sets all elements of $\mathcal D$ to $\mathrm{initv} :=x$.
	\end{itemize}
\end{definition}

Let
$D[0,\ldots,n]$ be an array storing $\mathcal D$. 
$D[0]$ is used only to store one bit 
to check if the array $\mathcal D$ is fully initialized. If $D[0] = 1$,
$\mathcal D$
is a normal array, otherwise the following rules apply.

The idea is to split a standard array into blocks of $b = 2$ words
and group the blocks into two areas: The blocks before some threshold
 $t\in \Nat$ are in the {\em written area}, the remaining in the {\em unwritten
 area}. 
Moreover, they call two blocks {\em chained} 
if the values of their first words point at each others
position and if they belong to different areas. In the following, we denote by
$d(B)$
the block chained with a block $B$. Note that $d(d(B)) = B$.

In our paper, we also use chains. Therefore, we shortly describe the meaning of a
(un)chained block $B$ in~\cite{KatG17}. 
Compare the following description with Figure~\ref{fig:blockStates}.
If $B$ is in the written area and chained,
then $B$ is used to store a value of another initialized block and is therefore defined as 
uninitialized. 
If
$B$ is in the unwritten area and chained, 
$B$ contains
user written values and the words of $B$ are divided among $B$ and $d(B)$.
In this case $d(B)$ was not written by the user. 
In detail, the second word of $B$ is stored in this word of $B$,
 but the first word is stored in the
second word of $d(B)$ since the first word is used to store a pointer.  
If $B$ is in the written area and unchained,
$B$ contains user written (or initial) values. 
If $B$ is in the unwritten area and unchained, 
$B$ has never been
written by the user.
Every time the user writes into a block for the first time, i.e., 
into a word of an (un)chained block in the (un)written area, 
the written area increases by moving the first block of the unwritten area into the
written area and by possibly building or correcting chains.
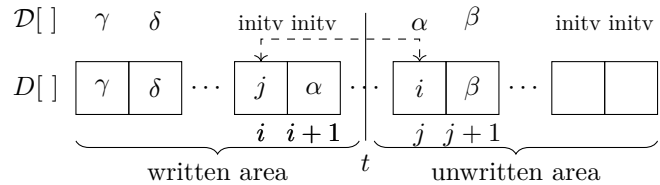
\begin{figure}[h]
	\centering
	\begin{tikzpicture}[
	start chain = going right,
	node distance = 0pt,
	box/.style={draw, minimum width=2em, minimum height=2em,outer sep=0pt, on chain},
	dots/.style={minimum width=2em, minimum height=2em,outer sep=0pt, on chain}],
	\node [box] (1) {$\gamma$};
	\node [box] (2) {$\delta$};
	\node[above=3mm of 1] (g) {$\gamma$};
	\node[above=3mm of 2] (d) {$\delta$};
	\node [dots] (3) {$\cdots$};
	\node [box] (4) {$j$};
	\node [box] (5) {$\alpha$};
	\node[above=3mm of 4] (m)  {\scalebox{0.85}{initv}};
	\node[above=3mm of 5] (n) {\scalebox{0.85}{initv}};
	\node[below=0mm of 4] (i) {$i$};
	\node[below=0mm of 5] (x) {$i + 1$};
	\node [dots] (6) {$\cdots$};
	\draw [draw, on chain] (3.5,-0.7) -- (3.5,1);
	\node (t) at (3.5,-1) {$t$};
	\node [box] (7) {$i$};
	\node [box] (8) {$\beta$};
	\node[above=3mm of 7] (a) {$\alpha$};
	\node[above=3mm of 8] (b) {$\beta$};
	\node[below=0mm of 7] (j) {$j$};
	\node[below=0mm of 8] (y) {$j + 1$};
	\node [dots] (9) {$\cdots$};
	\node [box] (10) {};
	\node [box] (11) {};
	\node[above=3mm of 10] (t) {\scalebox{0.85}{initv}};
	\node[above=3mm of 11] (s) {\scalebox{0.85}{initv}};
	\node[below=0mm of 4] (i) {$i$};
	\node[below=0mm of 5] (i) {$i + 1$};
	\begin{scope}[-{Stealth[length = 2.5pt]}]
	\node[above=2mm of 4] (e) {};
	\node[above=2mm of 7] (r) {};
	\path [<->,dashed, draw](4.north) -- ++(0.0,0.3 ) -|  (7.north);
	\node[left=1mm of 1] (d) {$D$[ ]};
	\node[above=3mm of d] (D) {$\mathcal D$[ ]};
	\end{scope}
	\draw[decorate,decoration={brace, amplitude=5pt, raise=10pt, mirror}]
	(1.south west) to node[black,midway,below= 15pt] {written area} ($(6.south) + (-1mm,0)$);%
	\draw[decorate,decoration={brace, amplitude=5pt, raise=10pt, mirror}]
	($(6.south) + (1mm,0)$) to node[black,midway,below= 15pt] {unwritten area} (11.south east);%
	\end{tikzpicture}
	\caption{The different states of blocks inside the written and unwritten area.
		$\mathcal D$ represents the users view on the data with {\em initv}
 being the initial value, $\alpha, 
		\beta, \gamma$ and $\delta$ as user written values. $D$~represents the internal view with $i$ 
		and $j$ as indices of $D$.}\label{fig:blockStates}
\end{figure}

\newpage
\section{Extra Space during Initialization of Succinct Data Structures}\label{sec:extra-datastructures}

We now consider an arbitrary data structure $\mathcal D^*$. 
The only assumption that we make is that
$\mathcal D^*$ accesses the memory via a data structure $\mathcal D$ realizing
an $n$-word array.
For the time being, assume $\mathcal D$ is the data structure 
as described in Section~\ref{sec:in-place-array} with $D[0,\ldots,n]$ being an
 array storing $\mathcal D$.

We define $|U|$ as the number of blocks of the unwritten area.
As long as $\mathcal D$ is not fully initialized, and only $m \in 
\Nat$ with $m < n$ words are written in $\mathcal D$, Theorem~\ref{lem:place-in-array} shows that 
the unused storage allocated for 
$\mathcal D$ allows us to store information of an extra array $E$ inside $D$ by increasing 
the block size to $b > 2$ (Figure~\ref{fig:EInD}).
The size of $E$ depends on $|U|$ and thus on the current grade of 
initialization of $\mathcal D$.
In contrast to $\mathcal D$, $E$ is an uninitialized array.
Clearly, we can build an initialized array on top of $E$.

The array size $n$ is not always a multiple of the block size $b$.
By decreasing $n$ and handling less than $b$ words of $\mathcal D$
separately, we assume in this paper that $n$ is a multiple of $b$.

\begin{theorem}\label{lem:place-in-array}
	We can store an extra array $E$ of $(n/b - m)(b - 2) \le |U|(b - 2)$ words inside 
	the unused space of~$\mathcal D$. $E$ dynamically shrinks from the
 end whenever $m$ increases.
\end{theorem}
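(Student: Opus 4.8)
The plan is to exploit the structure of Katoh and Goto's array directly. Recall that $D$ is partitioned into blocks; with block size $b=2$ the written area occupies the first $t$ blocks and the unwritten area the remaining $|U|$ blocks. The key observation is that an unwritten, unchained block of $\mathcal D$ carries no user information at all, and an unwritten, chained block uses only its \emph{second} word for a user value (its first word holds a pointer to the partner block in the written area). So within the unwritten area, every block has at least one ``free'' word — the first — and unchained ones have two. Increasing the block size to some $b>2$ means each block of $D$ now consists of $b$ consecutive words; I will reserve words $0$ and $1$ of each block for exactly the role they play in the $b=2$ scheme (initialization bit-via-pointer plus the single stored value), and claim that words $2,\ldots,b-1$ of every block are untouched by $\mathcal D$'s operations as long as that block lies in the unwritten area.

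First I would argue this invariant precisely: Katoh and Goto's read/write logic only ever inspects or modifies the first two words of any block (the pointer word and the value word), so redefining a block as $b$ words wide and simply ignoring words $2,\ldots,b-1$ leaves all of $\mathcal D$'s operations correct and oblivious to those words. Next, I would lay out $E$ as the concatenation, in block order, of the $(b-2)$-word tails of the blocks currently in the unwritten area: block $j$ of the unwritten area (for $j=0,\ldots,|U|-1$) contributes words $jb+2,\ldots,jb+(b-1)$ of $D$, giving $|U|(b-2)$ words of storage. Since $m$ words of $\mathcal D$ are written and the written area grows one block per first-time write into a fresh block, the unwritten area currently has $n/b - \lceil m/b\rceil \ge n/b - m$ blocks when $b\ge 1$; more carefully, with the convention that $n$ is a multiple of $b$ and $t$ blocks are written, $|U| = n/b - t$ and $t \le m$, so $E$ has at least $(n/b-m)(b-2)$ words, which is the stated bound. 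Address translation for $E[k]$ is a division and remainder by $b-2$ to find the contributing block and the offset within its tail, then an offset computation into $D$ — all $O(1)$.

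For the dynamic shrinking: whenever the user performs a write that causes the threshold $t$ to advance by one block, the first block of the former unwritten area migrates into the written area, and its tail words $2,\ldots,b-1$ can no longer be used for $E$ (they may now hold — in the chained case — a stored user value of $\mathcal D$, and in any case $\mathcal D$'s correctness can no longer tolerate our writing there). That block was contributing the \emph{first} $b-2$ words of $E$ in my layout, so to make the shrink a pure truncation ``from the end'' I would index $E$ in the \emph{reverse} block order: $E[0..b-3]$ comes from the last unwritten block, and the block that gets absorbed on a write is the one supplying the highest indices of $E$. Then a first-time write shrinks $E$ by exactly $b-2$ words off its tail, matching the theorem; $E$'s contents at the surviving indices are unchanged because those blocks stay in the unwritten area and their tails are untouched. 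I would note that a write into an already-written block does not advance $t$ and hence does not shrink $E$, so the per-$m$-increase shrink claim is about the number of written words $m$, consistent with $|U|(b-2) = (n/b - t)(b-2)$.

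The main obstacle I anticipate is pinning down the invariant that \emph{no} operation of $\mathcal D$ ever reads or writes words $2,\ldots,b-1$ of a block in the unwritten area — in particular during the chain-building/chain-correcting step that fires on a first-time write. One has to check that when a block moves from unwritten to written and a chain is created or repaired, the data movement involves only the two designated words of the blocks concerned, never their tails; and symmetrically that a chained unwritten block's user value genuinely lives in word $1$ of its partner (as the Section~\ref{sec:in-place-array} description states) rather than anywhere in a tail. Once that is established the rest is bookkeeping, and I would also remark that the extra array $E$ comes uninitialized (as the paragraph before the theorem already flags), which is exactly why the later sections build an initialization layer on top of it.
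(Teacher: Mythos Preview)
Your argument has a real gap at the point where you pass from $b=2$ to $b>2$. With block size $b$, a block of $\mathcal D$ corresponds to $b$ user words, not two. So when $B$ is a chained block in the unwritten area, the user has written into $B$ and all $b$ of its values must be stored somewhere. Your plan reserves only words $0$ and $1$ of each block ``for exactly the role they play in the $b=2$ scheme (\ldots the single stored value)'' and asserts that Katoh--Goto's logic ``only ever inspects or modifies the first two words of any block.'' That leaves room for two pointers and two user values across $B$ and $d(B)$; the remaining $b-2$ user values of $B$ have nowhere to go. The invariant you want to verify in your last paragraph is therefore not merely unchecked---it is false for chained unwritten blocks under the layout you describe.

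What the paper does, and what your proof is missing, is an explicit redistribution of the user data across the chained pair. The block $d(B)$ sits in the written area and is uninitialized from the user's point of view, so all of its $b-1$ non-pointer words are available. Pack $b-1$ of $B$'s user values into $d(B)$ and keep the last one in $B$; then $B$ genuinely has $b-2$ unused words (and an unchained unwritten block has even more). Once this redistribution is in place, your counting $|U|\ge n/b-m$ and your reversed indexing of $E$ so that growth of the written area truncates $E$ from the end are both correct and match the paper's argument.
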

\begin{proof} 
        Let us consider a block $B$ inside the unwritten area.
        $B$ and $d(B)$ together provide 
	space for $2b$ words, but the chain between them exists only because the user wrote inside
	one block (block $B$, but not in $d(B)$), and thus we need to store $b$ words
        of user data. 
        Furthermore, $B$ and $d(B)$ each needs one word to store a pointer that represents 
	the chain. 
        By storing as much user data as possible of $B$ in $d(B)$, we have $b - 2$
	unused 	words in $B$---say, words $2$ to $b-1$ are unused. 
        Note that, if a block $B$ is unchained, then it contains no user values
	at all and we have even $b$ unused words.
	If the user wrote $m$ different words in $\mathcal D$, then the threshold $t$ of $\mathcal D$ ($t$ 
	is the number of blocks in the written area) is expanded at most 
	$m$-times.
        The written area consists of at most $m$ blocks of a total of $n / b$
	blocks. Consequently, the unwritten area has $(n/b - m) \le |U|$ blocks with each having $(b - 2)$ unused 
	words.
If we start to store $E[0], E[1], ...$ in the unwritten area strictly behind $t$, 
the indices of $E$ will shift 
every time 
the unwritten area 
shrinks.
Therefore, we store $E$ in reverse and $E$ loses with every shrink the last $b - 2$ words\full{, but get 
static 
indices}.
%
\end{proof}

\begin{figure}[h!]
	\centering
	\begin{tikzpicture}[
	start chain = going right,
	node distance = 0pt,
	fbox/.style={draw, minimum width=2em, minimum height=2em,outer sep=0pt},
	box/.style={draw, minimum width=2em, minimum height=2em,outer sep=0pt, on chain},
	gray/.style={fill=gray!45},
	dots/.style={minimum width=2em, minimum height=2em,outer sep=0pt, on chain}],
	\node [box] (0) {};
	\node [dots] (3) {$\cdots$};
	\node [box] (4) {$j$};
	\node[below=0mm of 4] (j1) {$i$};
	\node [box] (a) {$\alpha$};
	\node [box] (5) {$\beta$};
	\node [dots] (9d) {$\cdots$};
	\draw [draw, on chain] (3.85,-0.7) -- (3.85,1);
	\node (t) at (3.85,-1) {$t$};
	\node [box] (7) {};
	\node [box] (b) {\scalebox{0.8}{$E$[$|U|{-}1$]}};
	\node [box] (8) {};
	\node [dots] (9) {$\cdots$};
	\node [box] (d1) {$j$};
	\node [box] (d) {\scalebox{0.8}{$E$[\ ]}};
	\node [box] (d2) {$\gamma$};
	\node[above=3mm of d1] (a1) {$\alpha$};
	\node[above=3mm of d] (b1) {$\beta$};
	\node[above=3mm of d2] (c1) {$\gamma$};	
	\node[below=0mm of d1] (i1) {$i$};
	\path [<->,dashed, draw](4.north) -- ++(0.0,0.25) -|  (d1.north);
	\node [dots] (13d) {$\cdots$};
	\node [box] (10) {};
	\node [box] (c) {\scalebox{0.8}{$E$[0]}};
	\node [box] (12) {};	
	\begin{scope}[-{Stealth[length = 2.5pt]}]
	\node[left=1mm of 0] (d) {$D$[ ]};
	\node[above=2mm of d] (D) {$\mathcal D$[ ]};
	\end{scope}
	\draw[decorate,decoration={brace, amplitude=5pt, raise=10pt, mirror}]
	(0.south west) to node[black,midway,below= 15pt] {written area} ($(7.south west) + (-1mm,0)$);%
	\draw[decorate,decoration={brace, amplitude=5pt, raise=10pt, mirror}]
	($(7.south west) + (0.5mm,0)$) to node[black,midway,below= 15pt] {unwritten area} (12.south 
	east);%
	\end{tikzpicture}
	\caption{
          The block size $b = 3$ allows us to move $\alpha$ and $\beta$ to the chained
          block such that each block of the unwritten area has one unused word 
          to store the extra data of 
          $E$.}\label{fig:EInD}
\end{figure}
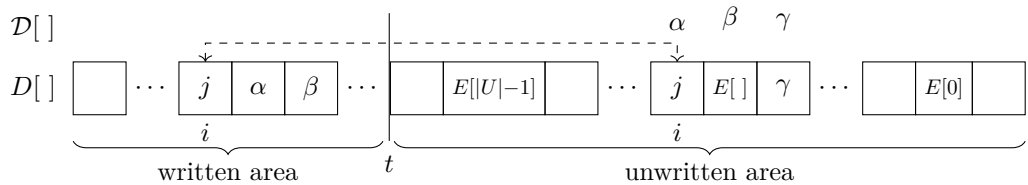

%
Note that values like 
the threshold $t$ and the initial value can be easily stored at the beginning of $E$. 
However, we require the size of $\mathcal D$ to determine the beginning of $E$.
To store the size we therefore move $D[1]$
also into $E$ and store the size of the array in $D[1]$.
During the usage of $D$ we have to take this into account, 
but for simplicity we ignore this fact.
If the array is fully initialized and thus $D[0] = 1$, we can not store the size
anymore, but in this case we do not need to know it.

\subsection{Application: $c$-ary Memory}

To use Dodis et al.'s dictionary~\cite{DodPT10} on $n$ elements, a lookup
table $Y$ with $O(\log n)$ entries consisting of $O(\log n)$ bits each must
be constructed, which can be done in $O(\log n)$ time.  Using tables of the
same kind to store powers of $c$, we can assume that $c=\Omega(n)$ by
combining consecutive elements of the input into one element.

Hagerup and Kammer extended Dodis et al.'s dictionary 
to support constant-time initialization by storing the 
$k=\Theta(\log n)$
elements that are added first to the dictionary in a trie $D^\mathrm{T}$ 
of constant depth $d\ge
4$ and out degree 
$n^{1/d}$ using $O(n^{\epsilon+1/d}+\log n \log c)$ bits for any $\epsilon>0$. Interleaved with
the first $k$ operations on the dictionary, first the table $Y$ is built and afterwards, 
the elements in $D^\mathrm{T}$
are moved to the dictionary.


Since $D^\mathrm{T}$ is required only
temporarily, it is stored within the memory allocated for the 
dictionary---say in the last part of the memory.  
The problem that arises is that 
the
last part of the memory can not be used as long as it is still used by $D^\mathrm{T}$. 
This problem is solved by partitioning the memory allocated for the 
dictionary into sectors and using a complicated mapping function that 
scrambles
the elements to avoid the usage of the last part of the memory. Unfortunately,
even after computing the table $Y$ and moving all elements from
$D^\mathrm{T}$ over to the dictionary, the data is still scrambled and 
each access to the dictionary has to start evaluating the mapping function.

With Theorem~\ref{lem:place-in-array} as an underlying data structure it is
easily possible to implement the dictionary with constant initialization time.
%
Use
the extra array $E$ to store temporarily the table $D^\mathrm{T}$. 
Even if we use block size $b = 3$, 
$E$ has enough space (at least $(n (\log c) - k \log n)/3 \ge (n-k)(\log n)/3$ 
bits at the end of the construction
of $Y$) to store
$D^\mathrm{T}$. 
The mapping
function becomes superfluous\full{ because the data is not scrambled by the usage of 
Theorem~\ref{lem:place-in-array}}.
%
Furthermore, working on a copied and slightly modified algorithm after
the full initialization of $\mathcal D$, 
we can avoid checking the extra bit in $D[0]$.

\subsection{Application: Succinct Encoding of Dense Graphs}
Farzan and Munro~\cite{FarM13} showed a succinct
encoding of an $n \times n$-matrix that represents an arbitrary graph with $n$ vertices and $m$ 
edges. 
Knowing the number of edges they distinguish between five cases: 
An almost full case, where the matrix consists of almost only one entries, an extremely dense case, a dense 
case, a 
moderate case and an extremely sparse case. 
For each case, they present a succinct encoding that supports the query 
operations for adjacency and degree in constant time and iteration over neighbors of a 
vertex in constant~time~per~neighbor.

We consider only the dense case where table lookup is used. Dense means that
$\exists \delta > 0 : $ $n^2 / \log^{1-\delta} n \le m \le n^2 (1 - 1/(\log^{1 -
\delta} n)$.  
As shown in~\cite{FarM13}, a representation in that case requires $\log {n^2 \choose m} + 
O(n^2(\log^{1-\delta} n)) = 
\log {n^2 \choose m} +
o(\log {n^2 \choose m})$ bits of memory. 
To encode the matrix of a graph Farzan and Munro
first divide the matrix into small {\em submatrices} of size $\log^{1 - \delta}
n \times \log^{1 - \delta} n$ for a constant $0 < \delta \le 1$.  
For each row and column of
this smaller submatrices they calculate a {\em summary bit} that is $1$ if the row and column, 
respectively, of the submatrix contains at least one~$1$.
The summary bits of a row and column, respectively, of the whole matrix is used to create a  {\em 
summary vector}.
On top of each summary vector they build a rank-select data structure~\cite{Pat08} that supports 
queries on $0$'s and $1$'s in constant time.
They also build a lookup table to map between possible submatrices and
indices as well as to answer all queries of interest in the submatrix in $O(1)$ time.
In a further step, they replace each submatrix by its index.
%
%
By guaranteeing that the number of possible submatrices is $o(\log n)$, the construction of the 
lookup table
can be done in $O(n)$ time.
To simply guarantee this, we restrict $\delta$ to be larger than $1/2$.
We generalize the result for $\delta > 1/2$ to dynamic
dense graphs by replacing the rank-select data structures with choice
dictionaries~\cite[Theorem~7.6]{HagK16} as follows.
We assume that the graph has initially no edges and that 
there is a stream that consists of edge updates and query operations.
With each edge update, the index of the submatrix with the edge changes.
To realize the index transition we use a translation table that maps 
both, the current index of the
submatrix and the edge update made, to the new index.
We store for each summary vector an {\em edge counter}, i.e., the number of $1$'s that was used 
to calculate each summary bit of it.
Whenever an edge is created between two nodes we set the corresponding bit of the row and column, 
respectively, inside the summary vector to $1$ and increment the edge counter.
If an edge is removed we decrement the edge counter and determine, using a lookup table, if the 
submatrix containing this edge has still $1$'s in the updated row and column, respectively.
If it does not, we set the corresponding summary bit inside the summary vector to $0$.

If the lookup tables for the submatrices are computed, we can easily 
answer queries to the current graph similar to~\cite{FarM13} since we still use the summary vectors.
We can ask for membership to answer adjacency queries. To iterate over 
the 
neighbors of a vertex (i.e., over the $1$'s in the summary vector), we can use the iterator function of 
the 
choice dictionary 
instead of using the select function of the rank-select data structure.
The degree query is answered by returning the edge counter.

In the rest of this subsection, the goal is to extend the data structure such that it supports the queries 
without a delay for the construction of the lookup tables. The idea is to store a submatrix with a $1$ 
entry
in the usual, non-compact way in the extra buffer and to add a pointer 
from the submatrix to a place in the buffer
where it is stored.
To reduce the space used for the
pointers, we group $\log^{\delta} n$ submatrices together to get a {\em group matrix} 
of size $\log^{1 - \delta} n \times \log n$.
Moreover, we use an array $A$ in
which we store a pointer of $O(\log n)$ bits for each group. 
With the first writing operation to one submatrix in a
group, the group matrix is stored in the usual, non-compact way in 
an initialized array, which is stored in 
the extra
buffer (Theorem~\ref{lem:place-in-array}).
In the array $A$, we update the pointer for the group.
In addition, we build a choice dictionary on top of each 
non-empty column and
each non-empty row of the group matrix. 
With each choice dictionary, we also count the number of $1$ entries. 

Using the counts we can update the summary vectors and 
the non-compact submatrix allows us to answer adjacency queries. Both can
be done in constant time without using lookup tables.
To support the neighborhood query, we use choice dictionaries on top of the
group matrices. 
%
In detail, whenever updating an edge, update the summary vector, add or follow the
pointer to the non-compact representation, update it including the choice dictionary 
and the degree counter.

After $O(n)$ operations the 
lookup table is computed and, in the same time, the entries in the non-compact matrices are
moved to the compact submatrices. 


The array $A$ requires $n^2 (\log n) /((\log^{1-\delta} n)(\log n)) = O(n^2/\log^{1-\delta}
n)$ bits\full{, which is negligible}.  Moreover, a group matrix with the choice
dictionaries requires $O(\log^2 n)$ bits to be stored.  Thus, after
$O(n)$ operations, the total extra space usage is $O(n \log^2 n)$ bits, which
easily fits in the extra buffers of size
$\Omega(\log {n^2 \choose m}  - n\log n) = \Omega(n^2/\log^{1-\delta} n -  n\log
n)$ 
%
bits after $O(n)$ 
operations.

\begin{theorem}
A graph with $n$ vertices can be stored with $log_2 \binom{n}{m}+O(n^2/\log^{1-\delta} n)$ bits
supporting edge updates as well as adjacency and degree queries
in constant time and iteration over
neighbors of a
vertex in constant~time~per~neighbor where $m=n^2/\log_2^{1-\delta} n$ and
$\delta>1/2$. A
startup time for constructing tables is not necessary.
\end{theorem}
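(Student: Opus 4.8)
The plan is to assemble the dynamic dense-graph structure sketched above and to verify that, once its temporary scaffolding is placed inside the extra buffer of Theorem~\ref{lem:place-in-array}, it runs with no startup delay and within the claimed space. First I would fix the static skeleton: take Farzan and Munro's dense-case encoding~\cite{FarM13}, but restrict to $\delta>1/2$ so that the number of distinct submatrices is $o(\log n)$ and the lookup table mapping submatrices to indices (and answering all in-submatrix queries) is computable in $O(n)$ time. Store the matrix compactly in a word array $\mathcal D$ as in~\cite{FarM13} --- submatrix indices in the body, a summary vector per row and per column of the grid of submatrices --- and, additionally, keep one \emph{edge counter} per summary vector recording how many $1$s contributed to it. Replace the rank-select structures on the summary vectors by choice dictionaries~\cite[Theorem~7.6]{HagK16}, which provide membership (for adjacency) and iteration over the $1$s (for neighbor iteration) in constant time; the degree query simply reads the edge counter. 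An edge update moves one submatrix to a new index, obtained from a precomputed translation table keyed by (old index, local edit), and updates the affected summary bits and counters. All of this is constant time per update or query once the tables exist.

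Next I would remove the startup delay. Group $\log^{\delta} n$ consecutive submatrices into a group matrix of size $\log^{1-\delta} n\times\log n$, and keep an array $A$ holding one $O(\log n)$-bit pointer per group, so $A$ uses $O(n^2/\log^{1-\delta} n)$ bits. On the first write touching a group, materialize its group matrix in the plain, non-compact layout inside an \emph{initialized} array that lives in the extra buffer $E$ of Theorem~\ref{lem:place-in-array} (block size $b=3$ already suffices), set the group's pointer in $A$, and build a choice dictionary with a $1$-counter on each non-empty row and each non-empty column of that group matrix. Until the tables are ready, every edge update follows the pointer into $E$, edits the non-compact group matrix together with its choice dictionaries and counters, and updates the summary vector from the counters; adjacency is read directly from the non-compact submatrix, the degree from the edge counter, and neighbor iteration from the choice dictionaries on the group matrices. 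Hence the query interface is identical before and after the tables finish.

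Then I would interleave the table construction with migration: after $O(n)$ operations the lookup table is built, and over those same $O(n)$ steps the non-compact group matrices are copied back, submatrix by submatrix, into the compact array $\mathcal D$, after which $A$, $E$, and the temporary choice dictionaries are discarded and queries proceed exactly as in the static skeleton. Finally I would check the two budgets. The permanent space is the compact encoding $\log_2\binom{n^2}{m}+o(\log_2\binom{n^2}{m})$ of~\cite{FarM13}, plus $A$ and the summary-vector choice dictionaries, all $O(n^2/\log^{1-\delta} n)$ bits, which gives the stated bound. The temporary space --- the non-compact group matrices and their choice dictionaries, $O(\log^2 n)$ bits each over $O(n)$ touched groups, i.e.\ $O(n\log^2 n)$ bits --- must always fit in $E$, whose capacity after writing into $\mathcal D$ is $\Omega(n^2/\log^{1-\delta} n - n\log n)$ bits and shrinks only as submatrices are compacted during migration; since $\delta>1/2$ this capacity is $\omega(n\log^2 n)$.

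I expect the main obstacle to be exactly this last point made rigorous: showing that $E$ never runs dry, i.e.\ that as $\mathcal D$ fills up with compact submatrices the buffer shrinks no faster than the scaffolding does, and that the migration and the table build can be spread over the same $O(n)$ operations so that \textsc{write}, adjacency, degree, and per-neighbor iteration all remain constant time throughout the transition. The space arithmetic above, together with the fact (remark after Theorem~\ref{lem:place-in-array}) that an initialized array can be built on top of the uninitialized $E$, is what makes this go through precisely for $\delta>1/2$.
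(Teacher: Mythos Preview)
Your proposal is correct and follows essentially the same construction as the paper: Farzan--Munro's dense-case encoding with $\delta>1/2$, choice dictionaries in place of rank--select, per-summary-vector edge counters, a translation table for index transitions, grouping of $\log^{\delta}n$ submatrices with a pointer array $A$, non-compact group matrices (with per-row/column choice dictionaries and counters) stored in an initialized array inside the extra buffer of Theorem~\ref{lem:place-in-array}, and migration interleaved with the $O(n)$-time table build. Your space accounting --- $O(n^2/\log^{1-\delta}n)$ bits for $A$, $O(n\log^2 n)$ bits of temporary scaffolding against a buffer of $\Omega(n^2/\log^{1-\delta}n - n\log n)$ bits --- is exactly the paper's, and your identification of the buffer-capacity check as the only delicate point matches what the paper leaves implicit.
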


Note that the space bound is succinct if the graph is dense. Moreover, if 
the whole memory of a graph representation must be allocated in the
beginning, i.e., if we
do not allow a change of the space bound during the edge updates, 
then our
representation of the dynamical graph (with possibly $n^2/\log_2^{1-\delta}
n$ edges) is also succinct.

\section{Extra Space for Dynamic Arrays}\label{sec:extraSpdynArr}
We now extend our framework to dynamic arrays.
%
%
Then the unused space inside $\mathcal D$ may grow or shrink and with it the size of
the array $E$. 

Since we start to index the words belonging to $E$ from the end of $\mathcal D$, 
changing the size of $\mathcal D$ will change the index of the words 
in $E$.  If we start to index $E$ from the beginning of the unwritten area,
all indices change every time the unwritten area shrinks.

To realize an indexed array we introduce a fixed-length array $F$.  To handle the
shrink operation, we store $F$ strictly behind the written area so that there
is nothing to do if we shrink or expand the size of $\mathcal D$.  If the
written area grows, we lose $b - 2$ unused words of the unwritten area and
therefore $b - 2$ used words in $F$.  As long as there are unused words in
the unwritten area behind $F$ we move the $b-2$ words that we would lose to the
first unused words behind $F$.  This will rotate $F$ inside the unwritten area of $D$.  
%
\begin{lemma}\label{lem:static-array}
	We can store a fixed-length array $F$ consisting of $\ell$ words in the unused space of 
	$\mathcal D$ 
	as long as $\ell \leq (n/b -m - 1)(b - 2) \le (|U| - 1)(b-2)$. If the condition is violated, then $F$ is 
	destroyed.
\end{lemma}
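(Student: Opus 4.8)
The plan is to build on Theorem~\ref{lem:place-in-array} and realize $F$ as a \emph{rotating} window inside the reversed layout of the extra array $E$. Recall that Theorem~\ref{lem:place-in-array} already gives us $E$ stored in reverse so that $E[0]$ sits at the very end of $D$ and $E[|U|-1]$ sits directly behind the threshold $t$; when the written area grows by one block, $E$ loses its last $b-2$ words, which are exactly the $b-2$ unused words in the block that just crossed $t$. The issue is that those lost words are the ones closest to $t$, i.e.\ the high-index end of $E$. So I would place $F$ at the high-index end of $E$: $F$ occupies the $\ell$ words of $E$ with the largest indices, namely the block just behind $t$ and the following blocks, using $b-2$ words per block. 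Concretely, if the unwritten area currently has $|U|$ blocks, then $F$ lives in the $\lceil \ell/(b-2)\rceil$ blocks immediately behind $t$, and the remaining unwritten blocks (if any) hold nothing we care about here.

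First I would describe the invariant precisely: $F$ is stored across a contiguous run of unwritten blocks starting at the block right behind $t$; within each such block, words $2,\dots,b-1$ hold consecutive entries of $F$ (entry order chosen so that the block adjacent to $t$ holds the entries that would be \emph{overwritten first}). When $\mathcal D$ shrinks or expands by removing/adding blocks at the end of $D$, only the low-index end of $U$ changes, so $F$ — anchored to $t$ — is untouched and there is literally nothing to do; this is the payoff of anchoring $F$ behind the written area rather than at the end of $D$. Then I would handle the one nontrivial event: the written area grows. When $t$ advances by one block, the block formerly holding the $F$-entries nearest $t$ becomes part of the written area and its $b-2$ extra words disappear. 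To preserve $F$, I move those $b-2$ endangered words to the first $b-2$ currently-unused words lying behind $F$, i.e.\ to the block just past the current footprint of $F$ (which exists precisely when $\ell \le (|U|-1)(b-2)$, leaving at least one free block behind $F$). This is an $O(1)$-word copy and re-links the footprint of $F$ by one block; over repeated growths, $F$'s footprint rotates cyclically through the unwritten area, staying always a contiguous-in-the-cyclic-sense run of $\lceil\ell/(b-2)\rceil$ blocks, which is why I maintain a single pointer to $F$'s current starting block and index into it cyclically.

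Next I would verify the stated bound. There are $n/b - m = |U|$ unwritten blocks, each contributing $b-2$ usable words, for a total capacity of $|U|(b-2)$ from Theorem~\ref{lem:place-in-array}. But the rotation argument needs one spare block behind $F$ at the moment of a growth so the endangered $b-2$ words have somewhere to go; hence $F$ may use at most $|U|-1$ blocks, giving $\ell \le (|U|-1)(b-2)$, and since $|U| \ge n/b - m$ this is implied by $\ell \le (n/b - m - 1)(b-2)$. If a growth would push the used block count of $F$ plus one above $|U|$ — equivalently, if after the growth $\ell > (|U|-1)(b-2)$ — there is no free block to rotate into, so $F$ can no longer be maintained and is declared destroyed, matching the statement. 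I would also note that reads and writes to $F[i]$ are $O(1)$: compute the block offset $\lfloor i/(b-2)\rfloor$, add it cyclically to the stored start-block pointer (wrapping within the $|U|$ unwritten blocks, and skipping nothing since all unwritten blocks past $t$ are available), and index word $2 + (i \bmod (b-2))$ inside that block.

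The main obstacle I anticipate is the bookkeeping of the cyclic rotation: one must be careful that ``the first unused words behind $F$'' is well-defined and that as $|U|$ itself shrinks (via the separate \textsc{shrink} of $\mathcal D$) the cyclic index space shrinks too, potentially colliding with $F$'s footprint — so the invariant must also say that a shrink of $\mathcal D$ is only permitted while it leaves $|U| \ge \lceil\ell/(b-2)\rceil + 1$ blocks (otherwise $F$ is again destroyed), and that shrinks remove blocks from the end of $D$, which in the cyclic picture is the point ``farthest behind $F$'', hence safe until that threshold is hit. The secondary subtlety is coordinating with the $D[1]$-holds-the-size convention mentioned after Theorem~\ref{lem:place-in-array}; I would simply remark that $F$, like $E$, lives in the region whose start is computed from the stored size, so no new difficulty arises. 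Everything else is a routine $O(1)$-word move plus pointer arithmetic, so I would keep those calculations terse.
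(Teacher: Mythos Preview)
Your construction is the paper's: anchor $F$ in the blocks immediately behind the threshold $t$ and, whenever the written area grows by one block, relocate the $b-2$ endangered words to the first free block past $F$'s current footprint, so that the logical-to-physical mapping inside $F$ shifts cyclically while the footprint itself stays glued to $t$.

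The one substantive difference is how you account for the $-1$ in the bound. You attribute it to needing a spare block as a rotation target. The paper instead reserves the \emph{last} block of the unwritten area to store $F$'s size and a counter from which the current rotation offset is recovered, and it is this metadata block that costs the $-1$; the rotation target then comes for free from the slack whenever the condition still holds after the growth (if it does not, $F$ is destroyed anyway). Your write-up says you ``maintain a single pointer to $F$'s current starting block'' but never says where that pointer lives --- that is precisely the bookkeeping the paper's reserved last block handles. Placing the metadata at the end of $D$ rather than in the block adjacent to $F$ also has the advantage that it stays put under growths of the written area; it only needs to be touched when $\mathcal D$ itself is resized.
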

\begin{proof}
	Let the size of $F$ be smaller than the unused space of $\mathcal
	D$.  Whenever the user writes an additional word in $\mathcal D$,
	$m$ increases and the number of the unused words becomes less. 
	Therefore, before extending the written area by one block, we move
	the $b - 2$ words in that block to the unused space of $\mathcal D$
	behind $F$.  We use the last block in the unwritten area to store the
	size and a counter to calculate the rotation.  Thus, we have one
	block less in contrast to Theorem~\ref{lem:place-in-array}.  $F$ can be of size $\ell
	\le (n/b -m - 1)(b - 2) \le (|U| - 1)(b-2)$.
\end{proof}
For some applications it is interesting to have a dynamic extra storage even if $\mathcal D$ is dynamic.
We introduce a dynamic set in Lemma~\ref{lem:dyn-list} that can be stored inside the 
unwritten area of $\mathcal D$ and supports the operations 
\textsc{add}, \textsc{remove}, and 
\textsc{iterate}. The last operation returns a list of 
all elements in the data structure.
The size of $L$ is dynamically upper bounded by $|U|$.
\begin{lemma}\label{lem:dyn-list}
	We can store a dynamic (multi-)set of maximal $\ell \leq (|U| - 1)(b - 2)$ 
        elements
        in the unused space of 
	$\mathcal D$. If the condition is violated by 
	write operations of $\mathcal D$, elements of the set are removed
        until the condition holds again.
\end{lemma}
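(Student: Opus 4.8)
The plan is to build the dynamic (multi-)set $L$ on top of the fixed-length array $F$ provided by Lemma~\ref{lem:static-array}, using the standard dynamic-array-of-unknown-size technique of maintaining a count of currently stored elements together with the usual doubling/halving of the allocated length. First I would reserve the last block of the unwritten area (as in Lemma~\ref{lem:static-array}) to hold, besides the size and rotation counter of $F$, also a field $s$ counting the number of elements presently in $L$; the elements themselves occupy positions $F[0],\ldots,F[s-1]$. An \textsc{add} then writes the new element into $F[s]$ and increments $s$; if $s$ would exceed the current length of $F$, we request a larger $F$ from Lemma~\ref{lem:static-array} (doubling its length), which is admissible precisely while the new length still satisfies $\ell\le(|U|-1)(b-2)$. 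A \textsc{remove}$(x)$ scans $F[0,\ldots,s-1]$, and on finding an occurrence overwrites it with $F[s-1]$, decrements $s$, and optionally shrinks $F$ when it becomes sparsely filled. An \textsc{iterate} simply returns the list $F[0],\ldots,F[s-1]$.

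The delicate point is the interaction with write operations on $\mathcal D$. Each time the user writes a fresh word in $\mathcal D$, the written area grows by one block, $|U|$ drops by one, and the space available for $F$ shrinks by $b-2$ words; Lemma~\ref{lem:static-array} guarantees $F$ survives as long as $\ell\le(|U|-1)(b-2)$, and otherwise $F$ is destroyed. I would handle this gracefully at the level of $L$: rather than letting $F$ be destroyed outright, I keep $F$ sized generously enough that after the shrink it still has length $\ge s$ whenever possible, and only when even $s$ no longer fits — i.e. when $s>(|U|-1)(b-2)$ — do I discard the trailing elements of $F$, decreasing $s$ to $(|U|-1)(b-2)$. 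Since elements at the tail of $F$ are exactly the most recently added ones in my representation, "removing elements until the condition holds again" is implemented by simply cutting $F$ at the new bound and resetting $s$ accordingly, which costs $O(1)$ amortized per lost word (it can be charged to the user write that caused the shrink, in the same way Lemma~\ref{lem:static-array} already charges the rotation of $F$).

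A secondary bookkeeping issue is that Lemma~\ref{lem:static-array}'s array $F$ is itself being rotated inside the unwritten area as the written area grows, and its length is something we choose once; to get a genuinely dynamic $L$ I need to re-invoke the lemma with a new length whenever I double or halve, copying the $s$ live elements over. Each resize touches $O(\ell)$ words but is triggered only every $\Theta(\ell)$ \textsc{add}/\textsc{remove} operations, so it is $O(1)$ amortized; combined with the amortized-$O(1)$ cost of the forced shrinks from user writes, every operation of $L$ runs in amortized constant time, and the invariant $\ell\le(|U|-1)(b-2)$ is maintained throughout. I expect the main obstacle to be stating precisely the amortized accounting when two independent sources — resizes of $L$ and shrinks of the unwritten area of $\mathcal D$ — both move the contents of $F$; the cleanest route is a potential-function argument where the potential is (roughly) the number of unused words in $F$ plus the slack $(|U|-1)(b-2)-\ell$, so that a user write to $\mathcal D$ and a resize of $L$ each change the potential by $O(1)$ times the work they do.
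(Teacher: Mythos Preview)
Your construction is workable but considerably more elaborate than the paper's. The key simplification you miss is that, because $L$ is a (multi-)set and not an indexed array, there is no need to commit to a fixed length for $F$ and hence no need for any doubling/halving or re-invocation of Lemma~\ref{lem:static-array}. The paper simply stores the elements contiguously in the unused words strictly behind the written area, reuses the cyclic-shift trick of Lemma~\ref{lem:static-array} verbatim to survive growth of the written area, and lets $\ell$ be the \emph{current} number of elements: \textsc{add} writes into the next unused word and increments $\ell$; \textsc{remove} (given a position) plugs the resulting gap by moving one element into it and decrements $\ell$. Everything is worst-case $O(1)$ per operation, with no amortized resizing and no potential-function argument.

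Your route is not incorrect, but the amortized accounting you anticipate---and correctly identify as the main obstacle---is entirely an artifact of layering a resizable array on top of a fixed-length $F$; the unordered nature of the set makes that layer unnecessary. Two smaller points: resizing $F$ while it is cyclically rotated inside the unwritten area is precisely the delicate step the paper avoids, and your linear-scan \textsc{remove}$(x)$ differs from the paper's positional removal, which is what the subsequent direct-access extension (the notification function) relies on.
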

\begin{proof}
	We store a dynamic list strictly behind the written area and shift it cyclically 
        as in the proof of 
	Lemma~\ref{lem:static-array}. The difference here is that 
        we have no fixed order so that we 
        can store new elements simply at the beginning of the 
	unused words of $\mathcal D$ and increase the size $\ell$. The
	removal of an element may create 
	a gap that can be filled by moving an element and decreasing $\ell$.
\end{proof}
For our implementation described in Section~\ref{sec:dyn-arrays} we require a (multi-)set with 
\textit{direct access} to the elements. 
We can use the position in $D$ as an address for direct access,
but this requires that the user 
implements
a \textit{notification function}. The function is called by 
our data structure whenever an 
element in the list changes its position to inform the user of the new position of the element.

\begin{fact}\label{fact:space-in-chained-block}
	Having two chained blocks $B$ and $d(B)$ and a suitable block size $b$,
	we can rearrange the user data in the two blocks such that we can 
	store $O(1)$ extra words (information as, e.g., pointers) in both $B$ and in $d(B)$.
\end{fact}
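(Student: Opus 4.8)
The plan is to revisit the case analysis that underlies Theorem~\ref{lem:place-in-array} and the block description from Section~\ref{sec:in-place-array}, and extract from it the one ``hard'' configuration, namely a pair of chained blocks. Recall that when $B$ lies in the unwritten area and is chained, the user has written exactly one of $B$ and $d(B)$ (namely $B$), so together they carry only $b$ words of genuine user data, while the two blocks physically hold $2b$ words. Two of those $2b$ physical words are spent on the chain pointers (one in $B$, one in $d(B)$), leaving $2b-2$ physical slots to hold $b$ user words. Hence there are $b-2$ physically free words spread across the two blocks. The whole point of Fact~\ref{fact:space-in-chained-block} is that, for a suitably chosen constant block size $b$, we can \emph{redistribute} the $b$ user words and the two pointers so that each of $B$ and $d(B)$ individually has at least one --- indeed $\Omega(1)$, by taking $b$ a large enough constant --- free word available for our own bookkeeping (pointers, counters, and the like).

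First I would fix the layout convention precisely: in $d(B)$ word $0$ holds its chain pointer and, as in the proof of Theorem~\ref{lem:place-in-array}, word $1$ of $d(B)$ is used to hold the first user word of $B$ (the one $B$ cannot store itself because $B$'s word $0$ is its chain pointer); words $2,\dots,b-1$ of $B$ carry the remaining $b-1$ user words of $B$. That accounts for all $b$ user words and both pointers. Now I would simply re-split: instead of packing all $b-1$ ``tail'' user words of $B$ into $B$, push a constant number $s$ of them further into the still-unused words $2,\dots$ of $d(B)$. Concretely, with block size $b \ge 2(s+1)$ there is room in $d(B)$ for its pointer, the spilled first user word, $s$ further user words, and still $b-1-(s+1)=b-s-2\ge s$ free words; and in $B$ we keep its pointer plus $b-1-s$ user words, leaving $s$ free words. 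Choosing $s=\Theta(1)$ and $b=\Theta(1)$ accordingly gives $\Omega(1)$ free words in each of the two blocks, which is exactly the claim. I would also remark that the reshuffle is purely positional: the user still sees the same $b$ values in block $B$ (and $d(B)$ still appears uninitialized, consistent with it lying in the written area), so correctness of the ambient initializable-array semantics from Section~\ref{sec:in-place-array} is unaffected, and translating a user access to the right physical word is an $O(1)$ computation since $b$ is constant.

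The step I expect to be the genuine obstacle is not the counting but making the redistribution \emph{uniform enough to be useful to callers}: the lemma will be applied (in Lemma~\ref{lem:dyn-list} and the direct-access variant, and later in Section~\ref{sec:dyn-arrays}) while blocks are repeatedly entering the written area and chains are being built, broken, and repaired. So I would be careful to state the free-word positions as a fixed function of $b$ only --- e.g.\ ``words $b-s,\dots,b-1$ of the unwritten-area block and words $b-s,\dots,b-1$ of its partner are reserved'' --- so that whoever maintains the chains can locate and preserve this bookkeeping space without a side table. A secondary subtlety is the boundary: the chain-maintenance operation that moves the first block of the unwritten area into the written area may momentarily violate the layout for the blocks it touches; I would note that this affects only $O(1)$ blocks and can be re-established in $O(1)$ time, so it does not change the asymptotics. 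With the layout pinned down this way, the proof is a short explicit packing argument plus the observation that all index arithmetic is on constants.
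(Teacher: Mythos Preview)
The paper does not prove this statement at all: it is labelled a \emph{Fact} and is left without justification, the implicit reasoning being exactly the counting already carried out in the proof of Theorem~\ref{lem:place-in-array} (two chained blocks occupy $2b$ physical words but carry only $b$ user words plus two chain pointers, so $b-2$ words are free and can be split between the partners by choosing $b$ large enough). Your proposal makes this explicit and is correct in substance; it is in fact more careful than the paper, which simply asserts the claim.

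One small slip to fix: in your initial layout you write ``words $2,\dots,b-1$ of $B$ carry the remaining $b-1$ user words of $B$'', but indices $2,\dots,b-1$ give only $b-2$ slots. You presumably mean words $1,\dots,b-1$ of $B$; the subsequent redistribution arithmetic (pushing $s$ of those words into positions $2,\dots$ of $d(B)$, requiring $b\ge 2(s+1)$) is then consistent. The additional remarks you make about fixing the free-word positions as a function of $b$ alone and about re-establishing the layout in $O(1)$ time when the written area grows are sensible and go beyond what the paper spells out, but they are exactly the kind of routine bookkeeping the paper silently assumes when it later invokes this fact in Sections~\ref{sec:unintended-chains}--\ref{sec:general-case}.
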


An algorithm often uses several data
structures in parallel, which all can be stored in the uninitialized space
of $\mathcal D$.

\begin{lemma}\label{lem:several-data-structures}
  For a $b \geq 2$, $\mathcal D$ can have $k\in \Nat$ extra data structures in parallel
 where each is of size
  at most
  $|U|\lfloor (b - 2)/k \rfloor$. 
\end{lemma}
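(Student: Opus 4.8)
The plan is to reduce the $k$-data-structures claim to $k$ independent applications of the single-buffer machinery already established in Theorem~\ref{lem:place-in-array} (and its indexed refinement in Lemma~\ref{lem:static-array}), by partitioning the $b-2$ unused words of each block of the unwritten area into $k$ contiguous slots. Concretely, I would pick a block size $b$ large enough that $\lfloor(b-2)/k\rfloor\ge 1$ and fix, for each $j\in\{1,\dots,k\}$, a slot $S_j$ consisting of $\lfloor(b-2)/k\rfloor$ word positions inside every unwritten block; the slots are pairwise disjoint and each occupies the same positions in every unwritten block. Then the $j$-th extra data structure lives entirely in the union of the $S_j$-slots over all blocks of the unwritten area, which is exactly $|U|\lfloor(b-2)/k\rfloor$ words, matching the claimed bound.

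The key steps, in order: first, recall from the proof of Theorem~\ref{lem:place-in-array} that rearranging user data in a chained pair $B,d(B)$ frees words $2,\dots,b-1$ of $B$ (and Fact~\ref{fact:space-in-chained-block} gives the same for general suitable $b$), and that an unchained block in the unwritten area is entirely free; so every block of the unwritten area contributes at least $b-2$ addressable free words, and these do not overlap with pointer words or user data. Second, observe that all bookkeeping the framework does when the written area grows by one block — moving the $b-2$ freed words, updating the rotation counter stored in the last block, adjusting chains — acts on each of the $k$ slots independently, because the slots are disjoint position-ranges; hence each data structure sees exactly the interface of Lemma~\ref{lem:static-array} (fixed-length indexed array) or Lemma~\ref{lem:dyn-list} (dynamic set), just with its own private sub-block of width $\lfloor(b-2)/k\rfloor$. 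Third, conclude that each of the $k$ structures can be realized with capacity $|U|\lfloor(b-2)/k\rfloor$ words, shrinking from the end as $m$ increases, exactly as before. One reserved last block (as in Lemma~\ref{lem:static-array}) suffices to hold the size/rotation information for all $k$ structures, since that is only $O(k)=O(1)$ words for constant $k$, absorbed into the $b-2$ budget of one block; for non-constant $k$ one slices that block too.

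I expect the main obstacle to be purely notational rather than mathematical: making precise that the "rotation" of the unwritten-area buffer (the cyclic shift used in Lemma~\ref{lem:static-array} so that a shrink of $\mathcal D$ costs nothing) is compatible with a fixed per-block slot partition — i.e.\ that when block boundaries move, the $j$-th slot of the new frontier block is still at a predictable offset, so indices into structure $j$ remain computable in $O(1)$ time. This follows because the slot offsets are the same constant in every block, so the address of the $r$-th word of structure $j$ is an affine function of $r$, $t$, $b$, $k$, $j$, and the rotation counter; I would state this address formula explicitly and note it is evaluable in constant time given $w=\Omega(\log n)$. A secondary point worth a sentence is the edge case where the capacity bound is violated for one structure: by the same convention as Lemmas~\ref{lem:static-array} and~\ref{lem:dyn-list}, that structure is truncated or destroyed while the others, living in disjoint slots, are unaffected.
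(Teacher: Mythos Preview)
Your approach is correct and is essentially the same as the paper's: partition the $b-2$ unused words of each unwritten block into $k$ disjoint slots of $\lfloor(b-2)/k\rfloor$ words each and host one data structure per slot. The paper's own proof is a single sentence (``every block of the unwritten area has $b-2$ unused words and every unused word can be used for another data structure''), so your discussion of rotation compatibility, explicit address formulas, and edge cases, while not wrong, is considerably more machinery than the paper deems necessary here.
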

\begin{proof}
	Every block of the unwritten area has $b - 2$ unused words and every unused word can be 
	used for another data structure.
\end{proof}
As we see later, it is useful to have several dynamic sets of elements that are all the same
size.  
The sets can also be (multi-)sets.
\begin{corollary}\label{cor:multiSet}
	For $b>2$, $\mathcal D$ can store a dynamic family $\mathcal{F}$ consisting of 
	dynamic sets where the sets have in total 
        at most $\lfloor (|U| - 3)(b - 2)/(s + 3)-1\rfloor$ elements each 
	of size $s$.
\end{corollary}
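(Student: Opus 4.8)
The plan is to build $\mathcal F$ on top of a single dynamic (multi-)set with direct access, as provided by Lemma~\ref{lem:dyn-list} together with the notification-function mechanism discussed after it, and then partition that one set logically into the members of $\mathcal F$. First I would reserve a constant number of blocks (three, say, matching the ``$|U|-3$'' in the bound) at the end of the unwritten area to hold global bookkeeping: the number of sets currently in $\mathcal F$, per-set headers (each storing a pointer to the set's first element and the current cardinality), and the rotation counter and size word already needed by Lemma~\ref{lem:static-array}. The remaining $(|U|-3)$ blocks, each contributing $b-2$ unused words, give a pool of $(|U|-3)(b-2)$ words for elements. Each stored element is represented as a record of $s$ words of payload plus $O(1)$ words of overhead: I would budget $3$ overhead words (one for a back-pointer or set-identifier so that the notification function can fix up headers, and two to splice the element into a doubly linked list within its set so that \textsc{add}/\textsc{remove} stay constant time). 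That accounts for the ``$s+3$'' denominator, and dividing the pool size by $s+3$ and subtracting one for rounding slack yields the stated capacity $\lfloor((|U|-3)(b-2)/(s+3))-1\rfloor$ as the total number of elements across all sets.

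Next I would spell out the operations. Adding a set to $\mathcal F$ allocates a fresh header in the reserved region and initializes its pointer to ``empty''; adding an element to a given set allocates the next free record from the pool (the underlying Lemma~\ref{lem:dyn-list} set stores these records as its elements), links it into that set's doubly linked list, and increments the set's counter. Removing an element unlinks it and returns its record to the pool; because Lemma~\ref{lem:dyn-list} may relocate a record to fill the gap, the notification function is invoked, and using the record's stored set-identifier and list links we update the at most $O(1)$ affected pointers (the neighbours in the linked list and possibly the set header) in constant time. Iterating a set walks its linked list. Shrinking $\mathcal D$'s written area triggers the same cyclic rotation of the whole structure as in Lemma~\ref{lem:static-array}; since all internal references are relative positions within the unwritten area (or are repaired via notifications), nothing else changes, and when the written area grows past the point where the capacity bound is violated, we drop elements from the pool tail-first, exactly as Lemma~\ref{lem:dyn-list} does, until the invariant $\sum_{S\in\mathcal F}|S|\le\lfloor((|U|-3)(b-2)/(s+3))-1\rfloor$ holds again.

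The main obstacle I anticipate is the accounting: making sure the overhead is genuinely $O(1)$ per element \emph{and} that it sums to the precise ``$s+3$'' and ``$|U|-3$'' constants in the statement, rather than merely to $O(s)$ and $O(|U|)$ bounds. This requires being careful that the doubly linked list pointers are stored \emph{inside} the element records (not in a separate side table, which would need its own space), that the reserved global region really needs only three blocks regardless of how many sets $\mathcal F$ holds --- which in turn forces the per-set headers to also live in the element pool rather than in the reserved region, so a more honest reading is that a set header is itself a record of $O(1)$ payload handled within the ``$s+3$'' budget --- and that the interaction with the cyclic rotation of Lemma~\ref{lem:static-array} does not secretly cost an extra block. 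Once those constants are pinned down, the correctness and the constant-time bounds for \textsc{add}, \textsc{remove}, and \textsc{iterate} (per element) follow directly from the linked-list implementation and from the guarantees of Lemmas~\ref{lem:dyn-list} and~\ref{lem:static-array}.
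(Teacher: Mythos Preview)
Your overall shape---doubly linked lists per set, three reserved blocks for global constants, $s{+}3$ words per record---matches the paper, but you allocate the three overhead words differently, and that leaves an unresolved hole. You spend them on a per-element back-pointer plus prev/next; the set \emph{headers} then have nowhere to go: the three reserved blocks hold only $O(1)$ words, and your fallback of making each header yet another $(s{+}3)$-record both consumes one element slot per set (so the stated capacity fails once the family is large) and, more importantly, gives no way to locate set~$i$ from the integer~$i$, since such header-records drift under relocation just like ordinary elements. The notification mechanism lets you repair pointers \emph{into} a moved record, but it does not give you an index from set numbers to headers.

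The paper's device is to spend the three extra words per slot as \emph{one} word contributed to a global index array~$I$ and \emph{two} for the list pointers. Concretely, the unused space is cut into sections of $s{+}3$ words; the first word of every section belongs to $I$ and the remaining $s{+}2$ to the element array~$S$ (payload of size $s$ plus two link words), with $I[i]$ pointing to an element of the $i$th set. Because $I$ and $S$ are interleaved and grow in lockstep, one can have as many integer-addressable sets as there are element slots, with no separate header pool at all. The three reserved blocks then need to hold only scalar bookkeeping (the threshold~$t$, $\mathrm{initv}$, the number of sets, the total element count~$q$ used to find the next free section, and a pointer~$p$ to the section currently being relocated when the written area grows), which is what makes the ``$|U|-3$'' exact. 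Once you adopt this interleaved layout, your remaining worries about rotation and about the constants matching disappear, since sections---not individual blocks---are the unit that gets moved.
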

\begin{proof}
	We use the unused words in the unwritten area of $\mathcal D$
	and 
	partition it into sections of
	$s + 3$ words.
	Every first word of a section is used for an array $I$ and 
	the last $s + 2$ words for an array $S$.
	$S$ is 
	used to store the elements of the sets where the elements of each
	set are connected in   
	a doubly linked list. 
	$I[i]$ points to an element of the $i$th set. 
	The unused words in the last 3 blocks (possibly, fewer blocks suffice) 
        of the unwritten area 
	are used to store some constants: the size $t$
	of the written area, the initial value of $\mathcal D$, some 
	parameter $p\in \Nat$, the 
	number of sets in $\mathcal{S}$, and 
	the total number $q$ of items over all sets.
	%
	Using $q$ we can simply find the section of a new element.
	
	If the written area of $\mathcal D$ increases, 
	we have to start moving the first section to a new unused section. We store
	the position of the new section in $p$. 
	By knowing $t$ we know how much of the first section has been
	already moved and where to find the information of the first section.
\end{proof}

\section{Dynamical Initializable Array}\label{sec:dyn-arrays}

In the next three subsections, we provide implementations to make the in-place initializable array 
dynamic. Therefore, we extend the initializable array $\mathcal D$ by the following two functions to 
change the current size of the array.
\begin{itemize}
	\item \textsc{increase}($n_{\mathrm{old}}, n_{\mathrm{new}}, \mathrm{initv}$) ($n_{\mathrm{old}}, 
	n_{\mathrm{new}} \in \Nat$): Sets the size of $\mathcal D$ from $n_{\mathrm{old}}$ to 
	$n_{\mathrm{new}}$.
	All the elements behind the $n_{\mathrm{old}}$th element in $\mathcal D$ are initialized with $\mathrm{initv}$,
	the initial value of $\mathcal D$ defined by the last call of \textsc{init}.
	\item \textsc{shrink}($n_{\mathrm{old}}, n_{\mathrm{new}}$) ($n_{\mathrm{old}}, n_{\mathrm{new}} 
	\in \Nat$): Sets the size of $\mathcal D$ from $n_{\mathrm{old}}$ to $n_{\mathrm{new}}$.
\end{itemize}

In our implementation we handle the array $\mathcal D$ differently, according to its size.
Recall that $w$ is the word size.
As long as $\mathcal D$ consists of $n'=O(w)$ elements, we use
a bit vector stored in
$O(1)$ words to know which words are initialized.
On the word RAM we can manipulate it in constant time.
To have immediate access to the bit vector we store it 
in the beginning of the array and move the data located there 
to the first unused words.
If we increase the size of $\mathcal D$ 
to more than $\omega(w)$ elements, we keep the bit vector in some
unused words until the first $n'$ words of
$\mathcal D$ are completely initialized. 
Having such a bit vector we assume 
that this is taken into account whenever there is a reading
or writing operation, but do not mention it explicitly in the rest of the paper.

If the array $\mathcal D$ consists of at least $\omega(w)$
elements, we can not use the solution with the bit vector. Instead, 
we use chains 
so that we can distinguish between initialized and
uninitialized blocks, even after several shrink and increase operations. 
%
%
We assume in the following that $\mathcal D$ consists of 
$\omega(w)$ elements.

\subsection{Increasing the size of $\mathcal D$}\label{sec:unintended-chains}
An increase of the size of~$\mathcal D$ means allocating additional memory that gives us several new blocks inside the unwritten area. 
These blocks may contain arbitrary values and these values can point at each other such that they 
create a so-called unintended chain between a block $B_w$ 
(written area) and a block $B_u$ (unwritten area). %
%
%
Katoh and Goto {\em break such unintended chains} by creating a self-chain (pointer at its own 
position) with $B_u$ whenever they write a value to a block of the written area.
Because the increase of $\mathcal D$ may be arbitrary, we can not destroy such chains in 
constant time.

To support large increases of $\mathcal D$, we eliminate the possibility of unintended chains by 
introducing 
another 
kind of chain, called \textit{verification chain}.
To distinguish the two kinds of chains we name the chain introduced in 
Section~\ref{sec:in-place-array} a \textit{data chain}. We define a chain between two 
blocks $B_w$ and $B_u$ as {\em intended} exactly if $B_w$ has also a verification
chain\full{, otherwise as
{\em unintended}}. 

To use the verification chain, we first distribute  
the user data among two chained 
blocks such that each block 
of the written area has an unused word 
(Fact~\ref{fact:space-in-chained-block}).
Let $L$ be a dynamic list with direct access stored in~$\mathcal D$ (Lemma~\ref{lem:dyn-list}).
Whenever the block $B$ in the written area has a data chain with a block $d(B)$, we additionally 
require that it also must 
have a verification chain with an element of the list $L$ (Figure~\ref{fig:dynSet}).
We denote by $v(B)$ a pointer to the element in $L$ chained with $B$ 
and use an unused word of $B$ to store $v(B)$.
Recall that, whenever the unwritten area shrinks, some elements in $L$ change their position in $D$. 
To ensure the validity of the verification chains, we use the notification function of $L$ to update the 
pointer $k \in \Nat$ in the effected blocks.
The verification pointers in $L$, stored behind the written area, are not affected by increasing the size of 
$\mathcal D$. 
%

Finally, note that $L$ has enough space to store all verification pointers 
since only one verification chain is required for 
each block in the unwritten area, i.e, $L$ is of size $O(|U|)$.

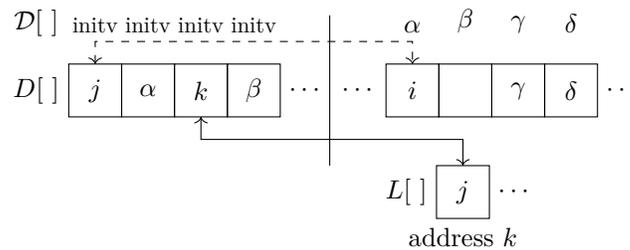
\begin{figure}[h]
	\centering
	\begin{tikzpicture}[
	start chain = going right,
	node distance = 0pt,
	box/.style={draw, minimum width=2em, minimum height=2em,outer sep=0pt, on chain},
	dots/.style={minimum width=2em, minimum height=2em,outer sep=0pt, on chain}],
	\begin{scope}[local bounding box=scope1]
	\node [on chain] (start) {$D$[ ]};
		\node[above=3mm of start] (D) {$\mathcal D$[ ]};
	\node [box] (1a) {$j$};
	\node [box] (1b) {$\alpha$};
	\node [box] (1c) {$k$};
	\node [box] (1d) {$\beta$};
	\node[above=3mm of 1a] (a) {\scalebox{0.85}{initv}};
	\node[above=3mm of 1b] (b) {\scalebox{0.85}{initv}};
	\node[above=3mm of 1c] (b) {\scalebox{0.85}{initv}};
	\node[above=3mm of 1d] (b) {\scalebox{0.85}{initv}};
	\node [dots] (6) {$\cdots$};
	\draw [draw, on chain] (3.9,-1) -- (3.9,1);
	\node [dots] (61) {$\cdots$};
	\node [box] (2a) {$i$};
	\node [box] (2b) {};
	\node [box] (2c) {$\gamma$};
	\node [box] (2d) {$\delta$};
	\node[above=3mm of 2a] (a1) {$\alpha$};
	\node[above=3mm of 2b] (b1) {$\beta$};
	\node[above=3mm of 2c] (c1) {$\gamma$};
	\node[above=3mm of 2d] (d1) {$\delta$};
	\node [dots] (9) {$\cdots$};
	\end{scope}
	\begin{scope}[-{Stealth[length = 2.5pt]}]
	\node[above=2mm of 4] (e) {};
	\node[above=2mm of 7] (r) {};
	\path [<->,dashed, draw](1a.north) -- ++(0.0,0.3) -|  (2a.north);
	\end{scope}
	\begin{scope}[start chain=2, shift={($(2a.south)+(-0.8mm,-1cm)$)}]
		\node[on chain=2] (d) {$L$[ ]}; 
		\node[box, on chain=2](l1) {$j$};
			\node[below=0mm of l1] (as) {address $k$};
		\node [dots, on chain=2] (l2) {$\cdots$};
	\end{scope}
	\path [<->, draw](1c.south) -- ++(0.0,-0.3) -|  (l1.north);

	\end{tikzpicture}
	\caption{The block $B_w$ (left) has a verification chain with an element of a dynamic list $L$ to 
	verify that the data chain between block $B_w$ and $B_u$ (right) is intended.}\label{fig:dynSet}
\end{figure}

\subsection{Shrinking the Size of $\mathcal D$ at most $O(w)$ Times}\label{sec:complex-solution}

Shrinking the size of $\mathcal D$ means to free some memory that may be used
to store information.  We distinguish between two cases.  In the first case
we shrink the size of $\mathcal D$ so much that $L$ is destroyed.  In
this case we make a \textit{full initialization} of $\mathcal{D}$ as
follows.

Since we lose the ability to check for
unintended data chains, we first destroy them
by iterating over the verification list $L$, following its pointer to a block $B_w$, checking if
the block $B_u = d(B_w)$ is behind the new size of $\mathcal D$.
If it is, we initialize $B_w$ with the initial value.
Writing the initial value into $B_w$ may created an unintended data chain, which we break.
Now the verification list is superfluous and we can 
iterate over the unwritten area and check for a data chain. 
If there is one, we move the user values from $d(B)$ into $B$ and
initialize $d(B)$, otherwise we initialize $B$.  
After the iteration, we set $D[0] = 1$. 

Since $L$ will be
destroyed by the shrinking operation, the size $|U|$ of the unwritten area
behind the shrinking is 
bounded by $|L|$.
  Thus, by using the potential function $\psi = (c
\,\cdot\,$length of $L$) for some constant $c\ge 2$, the amortized cost of the \textsc{write} operation 
increases by at most $c=O(1)$ 
and we can easily pay
for the full initialization.

In the second case, we reduce the size of $\mathcal D$ such that the
verification list $L$ is still completely present in $\mathcal D$.  In this
case, all blocks of the written area remain and they still have a
verification chain.  However, the data chain can point outside of $\mathcal
D$.  If we subsequently re-increase $\mathcal D$, some blocks may still have
a data chain and also a verification chain.  This violates our definition of
the \textsc{increase} operation. 

To resolve this problem 
%
we invalidate the data chains of all blocks outside $\mathcal D$ as follows.
As long as there is no shrinking operation, 
we store the same {\em version number} $v \in \Nat$ to all data chains.
More exactly, let $B$ be a block in the unwritten area chained with a block
$d(B)$ in the   
written area. Then we store the version
 inside an unused word of $d(B)$
using Fact~\ref{fact:space-in-chained-block}.

Before we execute a shrink operation, we set $n_{v}$ to the current size of
$\mathcal D$ and remember it as the size for the current version $v$. 
With the shrink operation, we increment the version number by one. 
%
Let $v'$ be the version of the chain between $B$ and $d(B)$.
We call the chain {\em valid} if $B$ is before 
the boundary $n^*_{v'}=\min\{n_v|v\ge v'\}$.
Note that $n^*_{v'}$ is the 
minimal boundary 
that 
$\mathcal D$ ever had after introducing version
number $v'$.

We obtain the 
boundaries $n^*_v$ 
from a data structure $M$ that provides an 
operation to 
add the new size of $\mathcal D$ after a shrink operation and another operation \textsc{minbound} to
check if a chain is valid, i.e., if one endpoint of the chain is or was outside of $\mathcal D$.
For later usage, $M$ also supports an operation \textsc{remove}. 
\begin{itemize}
	\item $\textsc{add}(n)$ ($n \in \Nat$): Increments an internal version counter $v$ by one 
	and set the boundary $n^*_v = n$. All boundaries 
	$n^*_i$ $(i 
	\in 
	\{1 \ldots v - 1\})$ larger than $n$ are overwritten by $n$.
	\item \textsc{remove}($j$) ($0 \le j \le v$): 
        Decrements $v$  by $j$
	and removes the boundaries of the largest $j$ versions.
	\item $\textsc{minbound}(v)$ ($0 \le v < w$): Returns $n^*_v$, the minimal 
	boundary for $v$.
\end{itemize}
\begin{lemma}\label{lem:dictionary}
	$M$ can be implemented such that it uses $O(w)$ words, 
        $\textsc{add}$ runs in amortized constant time, and 
	$\textsc{minbound}$ and \textsc{remove} run in constant 
	time as long as there are only $O(w)$ versions.
\end{lemma}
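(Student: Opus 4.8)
The plan is to maintain the boundaries $n^*_v$ explicitly in a small array indexed by version number, exploiting the fact that we never have more than $O(w)$ live versions, so everything fits in $O(w)$ words. Concretely, I would keep an array $N[1..w]$ where $N[v]$ holds the ``raw'' boundary that was passed to the $v$th \textsc{add}, together with a pointer to the current version counter $v$. The subtlety is that \textsc{minbound}$(v)$ must return $n^*_v=\min\{N[i]\mid i\ge v\}$, i.e.\ a suffix minimum, and we cannot afford to recompute a suffix minimum on each query. So I would additionally store a second array $P[1..w]$ of suffix minima that is kept up to date; but recomputing all suffix minima on every \textsc{add} would cost $\Theta(w)$ per operation, which is why the bound on \textsc{add} is only amortized.

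The key steps, in order: (1) For \textsc{add}$(n)$: increment $v$, set $N[v]=n$. Observe that the new suffix minimum $P[v]=n$ trivially (it is the last entry). For the earlier entries, the claim in the statement of \textsc{add} says ``all boundaries $n^*_i$ larger than $n$ are overwritten by $n$'' — this is exactly the statement that $P[i]\leftarrow\min(P[i],n)$ for all $i<v$, i.e.\ the suffix minima get clamped. Instead of touching all of $P$, I would use a ``staircase'' trick: keep $P$ as a non-increasing-from-the-left... more precisely, since each new $n$ only lowers suffix minima, $P[1]\ge P[2]\ge\cdots\ge P[v]$ is \emph{not} in general monotone, but the set of distinct values of $P$ forms a decreasing staircase whose breakpoints I store in a stack of size $O(w)$. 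An \textsc{add} pops all breakpoints with value $\ge n$ off the stack and pushes one new breakpoint $(v,n)$; amortized this is $O(1)$ because each breakpoint is pushed once and popped once. (2) For \textsc{minbound}$(v)$: binary-search or, better, since we only need $O(w)$ words and $w=\Omega(\log n_{\max})$, do a single table lookup / doubling search in the staircase stack to find the breakpoint governing version $v$ in $O(1)$ time — or simply keep a directly-indexed array $P[1..w]$ that is rebuilt lazily from the stack, amortizing the rebuild against \textsc{add}; then \textsc{minbound} is one array access. (3) For \textsc{remove}$(j)$: decrement $v$ by $j$ and discard the breakpoints of the top $j$ versions from the stack; this is $O(j)$ in the worst case, but removing $j$ versions destroys $\Theta(j)$ words of structure elsewhere (the verification list $L$ shrinks), so in the amortized accounting of the caller this is fine — however, the lemma asks for \emph{constant} (non-amortized) time for \textsc{remove}, so I would instead lazily mark the top of the stack as invalid and only physically truncate on the next \textsc{add} or \textsc{minbound}, reading the live version counter $v$ to ignore stale entries. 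That makes \textsc{remove} genuinely $O(1)$.

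The space bound $O(w)$ words is immediate: $N$, the breakpoint stack, and the (optional) array $P$ each have at most $w+O(1)$ entries of $O(\log n_{\max})=O(w)$ bits, i.e.\ $O(w)$ words total, valid precisely while the number of versions stays $O(w)$; if that is violated we fall back to a full initialization as described in Section~\ref{sec:complex-solution}, which is outside the scope of this lemma.

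I expect the main obstacle to be making \textsc{remove} run in true constant time while keeping \textsc{minbound} correct: after a \textsc{remove}, a later \textsc{add} with a small $n$ must not resurrect stale clamps from versions that were removed, so the lazy-truncation bookkeeping has to be done carefully — every \textsc{add} must first pop all stack entries whose version index exceeds the current live $v$ before doing its own pop-and-push, and I need to argue this combined popping is still amortized $O(1)$ (it is, since an entry created by \textsc{add} is popped at most once, whether as a ``clamp'' pop or a ``stale'' pop). The suffix-minimum-via-staircase idea itself is standard, so the writing will mostly be verifying that this invariant — the stack encodes exactly the function $v\mapsto n^*_v$ for all live $v$ — is preserved by all three operations.
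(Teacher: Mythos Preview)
Your staircase-stack idea is essentially the same construction the paper uses: keep, on a stack, the versions that are right-to-left minima together with their raw boundaries, and on \textsc{add} pop everything with boundary at least the new value before pushing. The amortized $O(1)$ analysis via ``each entry is pushed once and popped once'' is also identical (the paper uses the potential $\phi=|S|$). Your lazy-truncation treatment of \textsc{remove} is fine and in fact more explicit than what the paper writes down.

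The gap is \textsc{minbound}. You need, given a version $v$, the smallest breakpoint version $v'\ge v$ in the staircase, in \emph{worst-case} constant time. Your three suggestions do not achieve this: binary search in the stack is $\Theta(\log w)$; ``doubling search'' is no better; and the lazily-rebuilt array $P[1..w]$ of suffix minima cannot be maintained in amortized $O(1)$ per \textsc{add}, because a single \textsc{add}$(n)$ may lower $P[i]$ for $\Theta(w)$ indices $i$ (the number of \emph{versions} whose suffix minimum changes is not bounded by the number of \emph{breakpoints} popped, and a version's $P$ entry can be lowered many times over its lifetime). The paper closes exactly this gap by storing the set of breakpoint versions in the dynamic predecessor/successor structure of P{\v a}tra{\c s}cu and Thorup, which supports insert, delete, and successor in worst-case $O(1)$ time on a set of $O(w)$ keys; \textsc{minbound}$(v)$ is then one successor query plus one array lookup. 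An equally valid fix, since there are at most $O(w)$ versions, is to keep the breakpoint set as a $w$-bit mask and answer the successor query with a shift and a lowest-set-bit instruction; but you need to say one of these things explicitly rather than leave it at ``table lookup''.
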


\long\def\lemDictionary{
\begin{proof}
        We have a version counter $v^*$ and a table $T$ where we store the
        initial boundary for each version.
        Moreover, we use a stack $S$ that additionally allows us to access the elements of the 
	stack directly and the data structure $R$ from P{\v a}tra{\c s}cu and 
	Thorup~\cite{PatT14}. All three data structures can be implemented using a fixed size array
        and stored in extra buffers
        (Lemmas~\ref{lem:static-array}~and~\ref{lem:several-data-structures}).
        $R$ consists of all versions $v$ with $n_v=n^*_v$.
        The stack $S$ store the boundaries of the versions in $R$ in ascending 
        order from the bottom to the top.
        For a technical reason, $S$ stores also the version with each boundary.

        To answer the \textsc{minbound} operation for a version $v$, we first determine
        the successor $v'$ of $v$ in $R$ and then return $n^*_v=T[v']$.

	Assume that a new boundary $n_v = n$ is added to $M$ for a next version $v$.
	Set $T[v] = n$. 
	Before adding a new boundary $n_v$ to $S$ we remove all boundaries from $S$
	as long as the top of $S$ is larger than $n_v$. Whenever removing
	such a boundary, we remove the corresponding version from $R$.
        Finally, we add the new boundary to $S$ and to $R$.
	By doing this, we remove all boundaries that are larger than $n_v$. 
	Now all the versions of these boundaries get $n_v$ as their new boundary. 
	
	Finally, it remains to check the running time. 
	We use the potential function $\phi = |S|$ with $|S| \le w$
        being the current size of $S$. 
	Determining the boundary for a specific version requires looking into $R$
	and to check one word in $S$ and $T$. This does not change $\phi$ and runs in $O(1)$ time.
        If the  running time for adding or removing versions is not
	constant, then in all except a last iteration, we remove an element
	from the stack. Thus, the decrease of $\phi$ pays for this.
	%
\end{proof}}

\full{\lemDictionary}

\full{\begin{corollary}
 For every data chain, we can check in constant time if it is valid or not.
\end{corollary}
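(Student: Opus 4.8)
The plan is to reduce the validity test of a data chain to a single \textsc{minbound} query on the data structure $M$ of Lemma~\ref{lem:dictionary}. Recall that a data chain joins a block $B$ in the unwritten area with a block $d(B)$ in the written area, and that at the moment the chain is (re)established we stamp it with the then-current version number $v'$, which we keep in the unused word of $d(B)$ guaranteed by Fact~\ref{fact:space-in-chained-block}. By definition the chain is valid exactly if $B$ lies before the boundary $n^*_{v'}=\min\{n_v\mid v\ge v'\}$, i.e., if $B$ has stayed inside $\mathcal D$ ever since version $v'$ was created.

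Given a data chain, I would first decide which of its two endpoints is $d(B)$: it is the one whose position in $D$ is smaller than the threshold $t$, while the other endpoint, $B$, lies in the unwritten area; this is one comparison. Next, I would read the stored version number $v'$ from the unused word of $d(B)$, which costs $O(1)$ time by Fact~\ref{fact:space-in-chained-block}. Since we are in the regime of Section~\ref{sec:complex-solution} where $\mathcal D$ is shrunk at most $O(w)$ times, we have $v'<w$, so $\textsc{minbound}(v')$ is a legal query and, by Lemma~\ref{lem:dictionary}, returns $n^*_{v'}$ in constant time. Finally, I would compare the index of $B$ with $n^*_{v'}$ and report the chain as valid precisely when $B$ lies before $n^*_{v'}$. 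All four steps take $O(1)$ time, which gives the claim.

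The one point requiring a careful argument --- and the main obstacle --- is the \emph{correctness} of this criterion: that a chain flagged invalid is exactly one whose unwritten endpoint $B$ was, at some time after the chain was formed, outside $\mathcal D$ (so that after a later re-increase its pointer to $d(B)$ may be stale and the two blocks only accidentally form a chain), and, conversely, that a chain whose endpoints have both remained inside $\mathcal D$ throughout is never spuriously flagged invalid. This follows from the semantics of $M$: every shrink invokes $\textsc{add}$ with the new size, and $\textsc{add}$ overwrites every previously recorded boundary that exceeds it, so $n^*_{v'}$ is exactly the smallest size $\mathcal D$ has ever had since version $v'$ was introduced. Since $B$'s position never changes once $B$ belongs to the unwritten area, $B$ is before $n^*_{v'}$ iff $B$ was inside $\mathcal D$ at every point after the chain was stamped with $v'$, which is precisely the intended meaning of ``valid''. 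Note also that a shrink needs no bookkeeping on the individual chains: older chains keep their stamps, because all later shrinks are already absorbed into $n^*$.

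\begin{proof}
Given a data chain, its endpoint in the written area is the block $d(B)$ whose index in $D$ is smaller than the threshold~$t$; the other endpoint is the block $B$ in the unwritten area. Read in $O(1)$ time the version number $v'$ stored in the unused word of $d(B)$ (Fact~\ref{fact:space-in-chained-block}). As $\mathcal D$ is shrunk at most $O(w)$ times, $v'<w$, so by Lemma~\ref{lem:dictionary} the call $\textsc{minbound}(v')$ returns $n^*_{v'}$ in constant time. We report the chain as valid iff the index of $B$ in $D$ is before $n^*_{v'}$. Correctness is immediate from the definition of validity together with the fact that $n^*_{v'}=\min\{n_v\mid v\ge v'\}$ is the minimal size $\mathcal D$ has had after version $v'$ was introduced, as maintained by $\textsc{add}$. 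The whole test takes $O(1)$ time.
\end{proof}

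\end{corollary}
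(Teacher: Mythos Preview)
Your proof is correct and follows essentially the same approach as the paper: read the stored version $v'$ from the written-area endpoint, query $\textsc{minbound}(v')$ via Lemma~\ref{lem:dictionary}, and compare the position of the unwritten-area endpoint against the returned boundary. The paper's proof is a one-liner stating exactly this criterion; your version simply spells out the constant-time steps and the correctness argument in more detail.
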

\begin{proof}
A chain between some blocks $B$ and $d(B)$ is valid exactly if $d(B) \le \textsc{minbound}(v)$ is true, 
where $v$ is the
version of the chain and $d(B)$ is the position of the block that belongs to the unwritten area. 
\end{proof}}
Whenever a block in the unwritten area has an invalid chain (caused by
shrinking and re-increasing $\mathcal D$; determined by \textsc{minbound}), we return 
the initial 
value for
all words of the block.
We want to remark that the structure $M$ can maintain only $O(w)$ versions since it uses a 
dictionary from P{\v a}tra{\c s}cu and 
Thorup~\cite{PatT14}. 
The dictionary is dynamic and supports modification and access in constant time, but only for $O(w)$ 
entries.

We have to make sure that we have only
$O(|U|)$ chains (counting both valid and invalid chains) so that we are able to store them. 
The problem is that a
shrink can invalidate many chains. 
Therefore, we have to \textit{clean-up} our data structure
$\mathcal D$ as follows:
Whenever we add a chain, we check 
the validness of three old chains, i.e., chains with an old version number that are stored at the beginning of 
the unwritten 
area  in $L$. 
Invalid chains are removed whereas
valid chains are assigned to the current version.
To have the time for the clean-up, we also modify the shrink operation such
that, 
after each shrink operation, we make sure that we store at most $|U| / 2$
chains. 
If this is not the case, we run a full initialization. 
By assuming that every insert into $\mathcal D$ pays a
coin, 
this can be done in amortized constant time since the number of chains that
we have is bounded by the number of insert operations.

\subsection{The General Case}\label{sec:general-case}
The goal in this section is to limit the number of versions to $O(w)$
such that the data structure $M$ from Lemma~\ref{lem:dictionary}
can be always
used.  We achieve this goal by \textit{purifying} the chain
information, i.e., for some old version $v'$, we iterate through the chains
with a version larger than 
$v'$, remove invalid data chains and store the
remaining chains under $v'$. Finally,  we take $v'$ as the current version.
We so
remove all versions larger than $v'$, \mbox{which now can be reused.}

Since we have no data structure to find invalid chains out of
a large amount of all chains, we partition the chains of each version into small subgroups 
and use also here the dictionary from P{\v a}tra{\c s}cu
and Thorup~\cite{PatT14} to index the block positions $B$ and $d(B)$ that
represent the chain.  The dictionary from the subgroup with a version $v$ allows us to find a block 
that lies
behind a boundary $n^*_v$ in constant time as long as such a block exists.  

In detail, a subgroup always has space for $\Theta(w)$ chains (block
positions), the version $v$, and a data structure from P{\v a}tra{\c s}cu and
Thorup~\cite{PatT14} (Figure~\ref{fig:subgroup}).  To organize the subgroups
we use a dynamic family $S$ from Corollary~\ref{cor:multiSet}.  For each
version $v$, we create a dynamic set in $S$.  Each set consists of all subgroups of the same version.  
Instead of having a verification chain with a list $L$,
each chained block in the written area has now
a pointer to its subgroup. If the block is indexed in the subgroup, then its chain is verified.

As 
in the previous section we
use the data structure $M$ to store and check
the boundaries.  Additionally, we store the version $v$ and a table $T$. The
table is
used to store, for each version, the number of chains that are currently used.  
The updates of $T$
are not described below explicitly.  Based on the information in
$T$ we determine how many versions we can purify.
The details of the \textsc{purify} operation are described on the next page.

Note that $M$ and $T$ are of $O(w)$ words and the size of $S$ is linear
in the size of the number of chains and thus bounded by $O(|U|)$. Choosing
the block size $b$ large enough, but still $b=O(1)$, we can guarantee that
all data structures fit into $|U|$ unless $|U|=O(w)$.
By running the clean-up of the last section, $M$ and $T$ can be removed 
if to many writing operations shrink the unwritten area of $\mathcal D$ so much that $|U|=O(w)$.

\begin{figure}[h]
	\centering
	\begin{tikzpicture}[
	start chain = going right,
	node distance = 0pt,
	fbox/.style={draw, minimum width=2em, minimum height=2em,outer sep=0pt},
	box/.style={draw, minimum width=3.5em, minimum height=2em,outer sep=0pt, on chain},
	gray/.style={fill=gray!45},
	A/.style={pattern=north west lines, pattern color=gray!20},
	B/.style={pattern=horizontal lines, pattern color=gray!20},
	C/.style={pattern=north east lines, pattern color=black!99},
	dots/.style={minimum width=2em, minimum height=2em,outer sep=0pt, on chain}],
	\node [box,A] (0) {$B_1$};
	\node [box,B] (1) {$d(B_1)$};
	\node [box,A] (2) {$B_2$};
	\node [box,B] (3) {$d(B_2)$};
	\node [dots] (d) {$\cdots$};
	\node [box] (v) {$v$};
	\node [box,C,minimum width=8em] (7) {};
	\node [box] (p) {};
	\node [box] (n) {};
	\begin{scope}[-{Stealth[length = 2.5pt]}]
	\end{scope}
	\draw[decorate,decoration={brace, amplitude=5pt, raise=5pt, mirror}]
	(0.south west) to node[black,midway,below= 15pt] {$w$ chains for blocks $B_1, B_2, \ldots$} 
	($(d.south east) + (-0.5mm,0)$);%
	\draw[decorate,decoration={brace, amplitude=5pt, raise=5pt, mirror}]
	($(v.south west) + (0.5mm,0)$) to node[black,midway,below= 15pt] {version} ($(v.south east) + 
	(-0.5mm, 0)$);%
	\draw[decorate,decoration={brace, amplitude=5pt, raise=5pt, mirror}]
	($(7.south west) + (0.5mm,0)$) to node[black,midway,below= 15pt] {P\&T data s.} ($(7.south east) + 
	(-0.5mm,0)$);%
	\draw[decorate,decoration={brace, amplitude=5pt, raise=5pt, mirror}]
	($(p.south west) + (0.5mm, 0)$) to node[black,midway,below= 15pt] {prev\vphantom{t}} ($(p.south 
	east) + (-0.5mm, 0)$);%
	\draw[decorate,decoration={brace, amplitude=5pt, raise=5pt, mirror}]
	($(n.south west) + (0.5mm, 0)$) to node[black,midway,below= 15pt] {next} (n.south east);%
	\path [-,dashed, draw](p.north) -- ++(0.0,0.3) -|  (-1,0.66) 
	[label=above left:{Lorry}];
	\path [-,dashed, draw](n.north) -- ++(0.0,0.3) -|  (+12,0.66);
	\end{tikzpicture}
	\caption{A subgroup that is embedded as an element of the dynamic doubly linked list containing several block positions of a version $v$ that are indexed with a dictionary.}\label{fig:subgroup}
\end{figure}
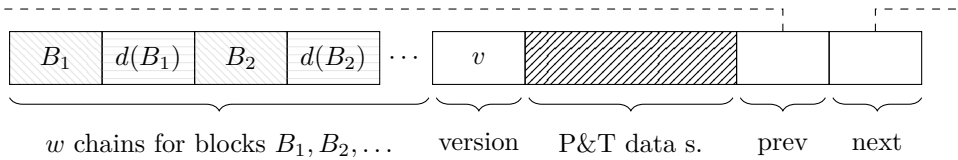
%
%
\begin{itemize}
	\item \textsc{initialize}($n$, $\mathrm{initv}$)
	Allocate $nw + 1$ bits.
		Partition the array into blocks of size $b = 6$. If $n$ is not a multiple of $b$, initialize
                less than $b$
		words and treat them separately.
        Use the last block(s) to store the threshold $t = 0$, initv, and $v=1$ in the unused words.
	Initialize the data structure $M$ and $S$, store their internally required single words
	also in the last block(s) and their sets and lists in parallel (Lemma~\ref{lem:several-data-structures}).
	\item \textsc{read}($i$):
	If $D[0] = 1$, return $D[i]$.
	Else, check to which area $B = \lfloor i/b \rfloor$ belongs.
	If $B$ is inside the written area,
	check if $B$ is verified. 
	If it is, return $\mathrm{initv}$, otherwise $D[i]$.
	
	If $B$ is inside the unwritten area, check if $B$ is chained with a block $d(B)$.
	If it is not, return $\mathrm{initv}$, otherwise proceed with checking if $d(B)$ is verified.
	If it is not, return $\mathrm{initv}$, otherwise follow the verification pointer and read
	the version $v$ out of the subgroup.
	Call $M.\textsc{minbound(v)}$ and return $\mathrm{initv}$ if it returned a boundary that is 
        larger 
	than the block position $B$, 
	otherwise return the right word out of $B$ and $d(B)$.
	
	\item \textsc{write}(i, x):
	If $D[0] = 1$, write at $D[i]$. Otherwise, clean-up three chains as described in the last subsection. 
	Then check to which area $B = \lfloor i/b \rfloor$ belongs.
	If $B$ is inside the written area, check if $B$ is verified.
	If it is not, write at $D[i]$ directly. 
	Otherwise, the block may have a data chain with a block $d(B)$. If so, unchain it (like in 
	\cite{KatG17}) by expanding the written area that gives us a new unused block that
	we use to relocate all values and chains from $B$. Correct also the chains in the subgroups.
	If not, just remove $B$ and $d(B)$ from its subgroup.
	In both cases, $B$ becomes an unused block afterwards. We initialize it and write at $D[i]$ directly.
	 If $B$ is inside the unwritten area, check if $B$ is chained with a block $d(B)$.
	 If it is, check if $d(B)$ is verified and if its chain belonging to a version $v$ is valid 
(test \mbox{M.\textsc{minbound}$(v)\ge B$}).
	 If all is true, then write $x$ at the right position of $B$ and $d(B)$.
	 If the chain is invalid, delete it from its subgroup and move it into the latest subgroup of 
	 the current version.
	 Finally, initialize $B$ and write $x$ as described above.
	 
	 But if $B$ is not chained, expand the written area and chain it with the new block.
	 Write both blocks inside the latest subgroup of the current version. 
	 Set a verification pointer to the subgroup where the chain is stored.
	 As before, initialize $B$ and write \mbox{$x$ as described above.}
	 
	 In all cases, whenever the unwritten area disappears, set  $D[0] = 1$.
	 \item \textsc{increase}($n_\mathrm{old}$, $n_\mathrm{new}$, initv):
	 If $D[0] = 0$, then copy the words of the last block(s) into the new last block(s).
	 Otherwise, initialize the required data structures as described in \textsc{initialize},
	 but set the initial values for the threshold $t$ to $t = \lfloor n_\mathrm{old} / b \rfloor$.
	 If $|U| = O(w)$, use the bit vector solution described in the beginning of 
	 Section~\ref{sec:dyn-arrays}.
	 \item \textsc{purify}($n_\textrm{new}$):	
Iterate from the current version $v = \lceil w \rceil$ in $S$ down and add up the number of chains
stored under a version. Stop at the first version $v'$ that has fewer chains as twice the total number 
of chains of 
all 
versions visited before.
Consider versions $v' + 1$ to $v$ and iterate over all subgroups with that version.
In each subgroup, check for an invalid chain ($B$, $d(B)$) by using $M$,
remove it from the subgroup, from the index of the subgroup, and initialize $B$ with initv and repeat this on the 
subgroup until it has no invalid chains.
In the special case where the subgroup has less than $w$ chains, 
clear the subgroup by moving all chains into 
the latest subgroup of the version $v'$.

Then, change the version number of this subgroup to $v'$,
unlink it from its old set and link it into the set of $v'$.
When finished, set the current version number $v$ to $v'$.
	 \item \textsc{shrink}($n_\mathrm{old}, n_\mathrm{new}$):
	 If the new size of $\mathcal{D}$ either cuts the space used to store 
         the data structures $S$ or $M$ or 
	 leaves less than 
	 $\Theta(w)$ unused words in the unwritten area, run the full initialization.
	 
	 If we do not fully initialize $\mathcal D$, we proceed as follows:
	 If we have $\lceil w \rceil$ versions, call $\textsc{purify}$.
         Otherwise, create a new set inside $S$ and increment the current version 
	 number $v$ by one and
	 check if the last used subgroup has less than $w$ entries. 
         If so, we reuse this subgroup by removing all invalid chains. (In \textsc{purify} we already described the 
         removal.)
\end{itemize}
Who pays for the purification? The idea is to give 
a gold coin to each previous set in $S$, whenever we insert an element.
Whenever a set of size $x$ has at least $x/2$ coins, we clean up the set
and all sets with a larger version number. 
Algorithmically we can not check fast enough if a set already has enough coins. 
Therefore, we wait with the purification until we have $\lceil w \rceil$ versions and check
then the condition for 
every version
from the largest version to the smallest until we found the first version with enough coins.
\begin{theorem}\label{thm:dynArr}
 There is a dynamic array that works in-place and supports
 the operations \textsc{initialize}, \textsc{read},
 \textsc{write} as well as \textsc{increase} in constant time and \textsc{shrink} in amortized
 constant time.
\end{theorem}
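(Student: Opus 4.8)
The plan is to assemble the machinery built in Sections~\ref{sec:unintended-chains}--\ref{sec:general-case} and to check the implementation listed above for correctness, space, and time. First I would fix the layout: an array $D[0,\ldots,n]$ of $nw+1$ bits split into blocks of size $b=6$ (the at most $b$ leftover words are handled separately), carrying the written/unwritten areas and data chains of Section~\ref{sec:in-place-array}, with verification realized by a pointer from each chained block of the written area into a subgroup of the dynamic family $S$ of Corollary~\ref{cor:multiSet}, together with the data structure $M$ of Lemma~\ref{lem:dictionary}, the counter table $T$, and a handful of single words (threshold $t$, $\mathrm{initv}$, current version $v$); by Lemma~\ref{lem:several-data-structures} the sets and lists of $S$, of $M$, and $T$ all live in parallel inside the $b-2=4$ unused words of each unwritten block, together with the single words in the last block(s). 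Correctness then reduces to verifying, for each operation, the invariant that a data chain $(B,d(B))$ with version $v'$ encodes genuine user data exactly when the block of the written area is \emph{verified} (indexed by the subgroup it points to) and the chain is \emph{valid} ($M.\textsc{minbound}(v')\ge B$): this is precisely what \textsc{read} tests before returning a stored word rather than $\mathrm{initv}$, what \textsc{write} re-establishes whenever it unchains a block, relocates it, or creates a fresh chain in the latest subgroup of the current version, what \textsc{increase} preserves (newly allocated garbage blocks have no verified partner, so any spurious data chain they form is simply ignored), and what \textsc{shrink} and \textsc{purify} keep by incrementing $v$, recording the new size via $M.\textsc{add}$, and re-tagging or discarding chains whose endpoint fell outside $\mathcal D$.

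For the space I would argue that $M$ and $T$ use $O(w)$ words, each subgroup uses $\Theta(w)$ words, and the number of subgroups is $O(1)$ per stored chain; since the clean-up of three old chains per \textsc{write} and the ``at most $|U|/2$ chains after each \textsc{shrink}'' rule keep the number of chains $O(|U|)$, the whole family $S$ has size $O(|U|)$. Taking the constant $b$ large enough, the $(b-2)|U|$ available words dominate this sum as long as $|U|=\omega(w)$; when a \textsc{shrink} would cut into $M$ or $S$, or leave $|U|=O(w)$, we perform a full initialization and switch to the $O(1)$-word bit-vector variant of Section~\ref{sec:dyn-arrays} (and \textsc{increase} switches back once $|U|$ is large again), so the in-place bound $nw+1$ bits holds at all times.

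For the running times, \textsc{initialize} allocates $nw+1$ bits and sets $t=0$, $v=1$, and empty $M$ and $S$, in $O(1)$ time; \textsc{read} does a constant number of area tests, pointer hops, and one $M.\textsc{minbound}$ call, hence worst-case $O(1)$; \textsc{write} additionally cleans up three chains, possibly expands the written area by one block and moves that block's $O(1)$ words, and performs a constant number of subgroup insertions and removals, hence also worst-case $O(1)$ (it never calls $M.\textsc{add}$); \textsc{increase} either copies the $O(1)$ words of the last block(s) into the new last block(s) or runs the $O(1)$-time setup of \textsc{initialize} there, hence $O(1)$. The only amortized operation is \textsc{shrink}. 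When it triggers a full initialization I would use the potential $\psi=c\cdot(\text{number of stored chains})$ as in Section~\ref{sec:complex-solution}: every \textsc{write} raises $\psi$ by $O(1)$, and the full initialization costs $O(\text{number of chains})$, paid by the drop of $\psi$. When an ordinary \textsc{shrink} instead calls \textsc{purify}, I would charge it with the gold-coin scheme: each chain insertion deposits one coin on every older version set, so a version set of size $x$ accumulates $x/2$ coins only after $\Omega(x)$ subsequent insertions, whence the linear work of re-tagging that set and deleting its invalid chains is covered; deferring the purification until $\lceil w\rceil$ versions have accumulated and then scanning from the newest version down to the first set with enough coins only reorganizes this accounting and guarantees the version count never exceeds $\lceil w\rceil$, so that $M$ (Lemma~\ref{lem:dictionary}) always applies.

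The step I expect to be the main obstacle is exactly this amortized analysis of \textsc{shrink}: one must establish simultaneously that the version count stays $O(w)$ (so $M$ and the P{\v a}tra{\c s}cu--Thorup dictionaries stay valid), that the total chain count stays $O(|U|)$ (so everything fits inside the unwritten area), and that the amortized cost of \textsc{shrink} is $O(1)$ — and these three goals pull against each other, since purification frees versions but costs work proportional to the chains it touches, the clean-up caps the chain count but needs spare time, and the batched (deferred) purification trades an eager coin test for a scan that must itself be absorbed. Making the constants in the ``stop at the first version holding fewer chains than twice the total seen so far'' rule and in the $|U|/2$ clean-up threshold cooperate with a single potential is the delicate part; the remainder is routine verification.
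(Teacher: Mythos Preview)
Your plan matches the paper's approach in every structural respect: the block layout, the verified/valid invariant for data chains, the space accounting via Corollary~\ref{cor:multiSet} and Lemma~\ref{lem:several-data-structures}, and the two-case analysis of \textsc{shrink} (full initialization charged against the chain count~$\psi$, \textsc{purify} charged by a gold-coin scheme). You also correctly isolate the merger of these accountings into a single potential as the crux.

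The paper supplies exactly that potential: $\phi=\psi+v+g+\tau+f$, with $g=\sum_{i=0}^{v-1}6i\,c_i/w$ (where $c_i$ is the number of chains of version~$i$), $\tau$ the total number of empty slots summed over all subgroups, and $f=(1-D[0])\max\{0,2w-|U|\}$ for the boundary case. One point in your sketch needs sharpening against this. Taken literally, ``deposits one coin on every older version set'' charges each \textsc{write} $\Theta(v)$, hence possibly $\Theta(w)$; that would make the coin deposits amortized $O(w)$, not $O(1)$, and the theorem would fail. The paper's $g$ normalizes by~$w$, and this suffices because \textsc{purify} iterates over $O(y/w)$ \emph{subgroups} of $\Theta(w)$ chains each --- not over $y$ chains individually --- apart from the chains it actually deletes or moves, which are paid separately by $\Delta\psi$ and $\Delta\tau$. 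The additional summand~$v$ in~$\phi$ pays for the downward scan over version numbers during \textsc{purify}, and $\tau$ pays for emptying half-full subgroups; both are small costs your sketch leaves implicit. With these refinements your plan is the paper's proof.
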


\long\def\thmDynArr{
\begin{proof} It remains to show the running times of the operations.
Take $\psi$ as the total number of verification chains similar to Subsection~\ref{sec:complex-solution},
$c_i$ as the number of chains with the version 
$i\in \{0,\ldots,v-1\}$ and $g = \sum_{i=0}^{v - 1} \frac{6i \cdot c_i}{w}$. 
Moreover, choose $\tau$ as the total number of missing elements such that all subgroups 
are full and \mbox{$f=(1-D[0])\max\{0,2w - |U|\}$.}
We use the following potential function $\phi = \psi + v  + g+ \tau + f$.
Note that 
$g$ corresponds to the 
usage of the gold coins 
and $f$ is non-zero only if there are very few blocks in the unwritten area $U$, but $\mathcal
D$ is not completely initialized.

\begin{itemize}
	\item \textsc{initialize}:
	We have to set the threshold $t$ and the current version in $O(1)$ time. Thus,
	$\phi = v = 1$---note that $f=0$ by using the bit-vector solution if necessary.
	\item \textsc{read}:
	Checking chains, 
        reading the version number
	and calling the \textsc{minbound} function of $M$
	can be done in constant time and $\phi$ does not change.
	\item \textsc{write}:
        We possibly have to check if a block is chained, verified and valid.
	We also may create a chain, a verification and insert it into $S$.
	All these operations require $O(1)$ time. 
	In total, the number of chains may increase by one $\Delta \psi = O(1)$, $\Delta g =
	\frac{6v}{w}$ with~$v$ 
	being the current version. Thus, $\Delta\phi=O(1)$.
	\item \textsc{increase}:
	Copying a few blocks of constant size 
        runs in $O(1))$ time and $\Delta \phi=0$ since $\Delta f=0$. 

	\item \textsc{purify}:	
	Whenever we run \textsc{purify} we remove the last $v-i$ versions for the largest $i$ with
	$v_i/2 \le y := \sum_{i+1}^{v - 1}c_i$ is valid. 
	Thus, we first have to search for version $i$, i.e, we consider $v-i$ versions.
	Moreover, we then have to iterate over $3y$ chains in $6y/w$ subgroups---a subgroup is always 
	at least half full.
	In each subgroup we spend $1 + d$ time, where $d$ is the number of deleted chains in
	each subgroup.
	This can be done in at most $(v-i)+6y/w + d^*$ time units where
	$d^*$ is the total number of deleted chains.
	In addition, we may have deleted half empty subgroups and moved their chains to the latest subgroup. This 
	can be done in $O(m)$ time if $m$ is the number of moved chains.

	We can pay for the running time since the value of the potential function decreases as follows.
	By deleting versions larger than $i$, $\Delta v = -(v-i)$.
	Since we change the version of all subgroups with a version larger
	than
        $i$ to $i$, i.e.,
	we shrink the version of $y$ chains by at least one, $\Delta g = -6y/w$.
	In addition, $\Delta(\psi + \tau) \le -2d^*+(d^*-m) = -d^*-m$ where the~$-m$ 
        comes from the fact that half empty subgroups disappear, i.e,
	are not taken into account in $\tau$.
	To sum up, the amortized running time is $O(1)$.

	\item \textsc{shrink}: In Subsection~\ref{sec:complex-solution}, we already analyzed the 
         case that $\mathcal D$ is fully initialized---the only change is that
 we have further parts in the potential function, but their change is either zero or
 negative. 
        Otherwise, we have to delete invalid chains in the latest subgroup
        and update it to the new version, which can be done 
        in $O(1)$ amortized time since $\Delta \psi$ drops linear in the number of
        deleted chains. Moreover, we have to add a new version to $M$
        and possibly delete several older versions. All this can be done in $O(1)$
        amortized time. And finally, we may run a \textsc{purify}; again in $O(1)$     
        amortized time.	
\end{itemize}
We thus have shown all running times as promised in the theorem.
\end{proof}
}

\full{\thmDynArr}

\full{
\section{Acknowledgments}
Andrej Sajenko acknowledges the support
of the Deutsche Forschungsgemeinschaft (DFG), project SpaceeffGA
(KA 4663/1-1). 
}

\phantomsection
\addcontentsline{toc}{chapter}{Bibliography}
\bibliography{main}

\conf{
\clearpage
\phantomsection
\addcontentsline{toc}{chapter}{Appendix}
\appendix

\section*{Appendix}

The purpose of this appendix is to present the missing proofs. The
corresponding
lemma and theorem are repeated for the reader's convenience.

\renewcommand{\thetheorem}{\ref{lem:dictionary}}
\begin{lemma} 
        $M$ can be implemented such that it uses $O(w)$ words, 
        $\textsc{add}$ runs in amortized constant time, and 
        $\textsc{minbound}$ and \textsc{remove} run in constant
        time as long as there are only $O(w)$ versions.
\end{lemma}

\lemDictionary

\renewcommand{\thetheorem}{\ref{thm:dynArr}}
\begin{theorem}
 There is a dynamic array that works in-place and supports
the operations \textsc{initialize}, \textsc{read},
\textsc{write} as well as \textsc{increase} in constant time and \textsc{shrink} in amortized
constant time.
\end{theorem}

\thmDynArr
}

\end{document}